\documentclass[11pt,english,letterpaper]{article}

\usepackage[margin=1in,footskip=0.25in]{geometry}
\usepackage{amssymb}
\usepackage{graphicx}
\usepackage{microtype}
\usepackage{amsmath}
\usepackage{amsthm}
\usepackage{enumerate}
\usepackage{enumitem}
\usepackage{wrapfig}
\usepackage{multirow}
\usepackage{nicefrac}
\usepackage[colorinlistoftodos,prependcaption,textsize=small]{todonotes}
\usepackage[colorlinks,linkcolor=blue,filecolor=blue,citecolor=blue,urlcolor=blue,pdfstartview=FitH,pagebackref]{hyperref}
\usepackage[nameinlink]{cleveref}
\usepackage{authblk}
\usepackage{multicol}
\usepackage{thm-restate}
\usepackage{microtype}
\usepackage{url}
\usepackage{wrapfig}
\usepackage{multirow}
\usepackage{subfigure}
\usepackage{makecell}

\usepackage{algorithm}
\usepackage[noend]{algpseudocode}

\definecolor{forestgreen}{rgb}{0.13, 0.55, 0.13}

\newtheorem{theorem}{Theorem}
\newtheorem{lemma}{Lemma}
\numberwithin{lemma}{section}
\newtheorem{claim}[lemma]{Claim}

\newtheorem{observation}[lemma]{Observation}

\newtheorem{remark}[lemma]{Remark}

\newtheorem{problem}{Problem}

\graphicspath{{./figures/}}

\def\A{\mathcal{A}}

\def\R{\mathcal{R}}

\def\U{\mathcal{U}}
\def\reals{{\mathbb R}}

\def\eps{{\varepsilon}}

\def\VP{\text{Vis}}        
\def\OPT{\text{OPT}}

\def\dfn#1{\emph{\textcolor{forestgreen}{\textbf{#1}}}}

\date{}

\newcommand{\old}[1]{{{}}}
\newcommand{\temp}[1]{{{}}}
\newcommand{\fullversion}[1]{{{}}}

\newcommand{\omrit}[1]{}
\newcommand{\omritin}[1]{}
\newcommand{\ofir}[1]{}
\newcommand{\ofirin}[1]{}

\title{Peeling Rotten Potatoes for a Faster Approximation of Convex Cover}

\author{Omrit Filtser}
\author{Tzalik Maimon}
\author{Ofir Yomtovyan}

\affil{\small Department of Mathematics and Computer Science, The Open University of Israel\\
\texttt{omrit.filtser@gmail.com}, \texttt{post.tmx@gmail.com}, \texttt{ofyotb1@gmail.com}}

\date{}

\begin{document}

\maketitle

\begin{abstract} 
The minimum convex cover problem seeks to cover a polygon $P$ with the fewest convex polygons that lie within $P$. This problem is $\exists\reals$-complete, and the best previously known algorithm, due to Eidenbenz and Widmayer (2001), achieves an $O(\log n)$-approximation in $O(n^{29} \log n)$ time, where $n$ is the complexity of $P$.

In this work we present a novel approach that preserves the $O(\log n)$ approximation guarantee while significantly reducing the running time. By discretizing the problem and formulating it as a set cover problem, we focus on efficiently finding a convex polygon that covers the largest number of uncovered regions, in each iteration of the greedy algorithm. 
This core subproblem, which we call the \emph{rotten potato peeling} problem, is a variant of the classic potato peeling problem. We solve it by finding maximum weighted paths in Directed Acyclic Graphs (DAGs) that correspond to visibility polygons, with the DAG construction carefully constrained to manage complexity.
Our approach yields a substantial improvement in the overall running time and introduces techniques that may be of independent interest for other geometric covering problems.
\end{abstract}

\section{Introduction}\label{sec:intro}
The \textit{Minimum Convex Cover} (MCC) problem is a fundamental problem in computational geometry: cover a given polygon $P$ (possibly with holes) using the fewest number of (possibly overlapping) convex polygons that lie within $P$.
The MCC problem is known to be NP-hard, even for simple polygons~\cite{CulbersonReckhow88, ORourkeSupowit83}, and recent work has shown it to be $\exists\mathbb{R}$-complete \cite{Abrahamsen21}. This implies that optimal solutions may require irrational coordinates, ruling out exact formulations via integer or rational arithmetic. Even when covering simple polygons with triangles, the problem remains $\exists\mathbb{R}$-hard.

The MCC problem has garnered significant attention in recent years, including being the focus of the CG:SHOP Challenge 2023~\cite{fekete2023minimum} --- a major international competition that attracted 22 teams and highlighted the practical importance and theoretical difficulty of the problem.
Despite this attention, there has been no improvement on the state of the art $O(\log n)$-approximation algorithm found by Eidenbenz and Widmayer~\cite{EidenbenzWidmayer03} in over two decades, largely due to the prohibitive running time of existing methods.

Eidenbenz and Widmayer \cite{EidenbenzWidmayer03} provided the first non-trivial approximation algorithm for general polygons. They achieve an $O(\log n)$-approximation factor by showing that solving a restricted (discrete) version of convex cover yields a constant-factor approximation to MCC. This restricted version can be formulated as a set cover problem, and by applying the classic greedy algorithm obtain an $O(\log n)$-approximation algorithm for MCC. Each iteration of the greedy algorithm requires finding a convex polygon that covers the largest number of uncovered regions in a partition of $P$. However, their approach suffers from an extremely high running time of $O(n^{29} \log n)$, because it includes the construction of a dense quasi-grid with $O(n^{16})$ triangles, and then iteratively selecting convex polygons using dynamic programming over this set, taking $O(n^{28})$ time per iteration.

For simple polygons, very recently, Browne, Kasthurirangan, Mitchell, and Polishchuk~\cite{BKP&M} presented a $6$-approximation algorithm that runs in $O(n^2)$ time. Their method employs window partitioning of the input polygon, which guarantees that no convex polygon in the optimal cover can intersect more than three regions of the partition. This structural property is critical to their analysis. However, the approach does not extend to polygons with holes, as the presence of holes can cause a convex polygon to intersect multiple disjoint regions of the partition, breaking the underlying assumption and invalidating the approximation guarantee.

In this paper we use a general approach similar to Eidenbenz and Widmayer by formulating the problem as a set cover problem and running the greedy algorithm. However, we show that it is enough to construct a much smaller set of regions for the partition of $P$, and describe a novel DAG-based approach to find large convex polygons in visibility polygons, that leads to a significant improvement in the running time, which then becomes $O(n^8)$.

Our DAG approach in visibility polygons was inspired by the recent work of Browne et al.~\cite{BKP&M}, although it is very different and solves a completely different problem. 
In~\cite{BKP&M}, the polygon $P$ is partitioned into weakly visible polygons (where there is a single edge that see the entire polygon), and in each such polygon they construct a DAG on its edges, in which a minimum path cover corresponds to a minimum convex cover for its boundary. Our DAGs are defined on a partition of visibility polygons (and not just its edges), and, roughly speaking, we show that a maximum (weighted) path in such a DAG corresponds to a convex polygon that covers the maximum number of good regions.

In addition, we notice that each iteration of the greedy algorithm solves a variant of the well-known potato peeling problem. The classic potato peeling problem is defined as finding the largest (area) convex polygon contained in a polygon $P$. In our case, we wish to cover only regions of $P$ that were not covered yet, however, we do not want to regard them as holes of the polygon, because the convex polygons in the solution may overlap. We therefore regard those regions as ``rotten'' and we wish to find the convex polygon which covers the largest amount (by area or number) of regions which are non-rotten. 

The potato-peeling problem (also known as convex skull) is a fundamental problem in computational geometry, introduced in an early work by Goodman~\cite{goodman1981largest} in 1981.
The best known (exact) algorithm for simple polygons runs in $O(n^7)$ time, and in $O(n^8)$ for polygons with holes \cite{chang1986polynomial}. There is also a $(1-\eps)$-approximation randomized algorithm that runs in $O\left(n\left(\log^2 n + \frac{\log n}{\varepsilon^3} + \frac{1}{\varepsilon^4}\right)\right)$ time, with success probability $\geq 2/3$.

Several different variants of the potato peeling problem have been considered in the literature. Hall‑Holt et al.~\cite{hall2006finding} designed an $O(n\log n)$-time constant factor approximation algorithm for computing largest simple shapes (such as triangles and ellipses) inside a polygon. Crombez, da Fonseca and Gérard \cite{crombez2018peeling} consider a discretized version of the problem on lattice points. Aronov et al.\cite{aronov2011peeling} consider the problem of peeling meshed potatoes, where the polygon $P$ is given with a triangulation (possibly with additional interior vertices). They present an exact $O(m^2)$ algorithm to find the convex subpolygon of maximum area that is a union of triangles from the mesh, where $m$ is the number of triangles in the mesh. 

\vspace{-7pt}
\paragraph{Our contribution.}
We present an $O(n^8)$-time $O(\log n)$-approximation algorithm for the minimum convex cover problem. While still being a high polynomial running time, our result
substantially improves the running time of the state of the art $O(\log n)$-approximation algorithm —-- from $O(n^{29} \log n)$ to $O(n^8)$  --— making the first significant progress on this longstanding open problem in over two decades.

Our algorithm is based on a formulation of the MCC problem as a set cover instance, where each iteration involves solving a generalization of the classic potato peeling problem. 
Given a polygon $P$ and a set $R$ of rotten regions in $P$, the goal is to find a convex polygon $Q \subseteq P$ that maximizes some weighted measure (such as area or number of subregions from a given partition of $P$), where only $Q\setminus R$ contributes to the objective.
We define two variants of this \emph{rotten potato problem}: one maximizing the weight of $Q\setminus R$, and the other maximizing the number of subregions in $Q\setminus R$ from a given partition of $P$ (meshed potato peeling). The general approach in our solution for both problems is similar: via a reduction to the longest-path problem, in a carefully constructed geometric DAG. The second version is mainly used as a sub-procedure for solving the MCC problem. For the first version, which is more natural, we obtain a $\frac{1}{4}$-approximation in $O(kn^5)$ time, where $k$ is the complexity of $R$. In the special case with no rotten regions, our method recovers a $\frac{1}{4}$-approximation for the classical potato peeling problem in $O(n^5)$ time.

\section{Outline of the algorithm}\label{sec:outline}
Let $P$ be a polygon (possibly with holes), and denote by $V$ the set of its $n$ vertices. Denote by $\partial P$ the boundary of $P$. In this paper, whenever we say that we traverse the boundary of $P$, we do so in counterclockwise (CCW) order.

Let $\OPT$ be a minimum convex cover for $P$, i.e. $\OPT$ is a minimum size set of convex polygons in $P$, such that $\bigcup_{Q\in \OPT}Q=P$. Note that $|\OPT|=O(n)$, because any triangulation of $P$ covers $P$ and has size $n+2h-2$, where $h$ is the number of holes in $P$.

\paragraph{A discretization of the polygon.}
For two points $u,v$ such that $\overline{uv}\subset P$, the \dfn{extension} of $\overline{uv}$ is the maximal segment $\overline{ab}\subseteq P$ such that $\overline{uv}\subseteq \overline{ab}$. If $u$ and $v$ are vertices of $P$, then the extension of $\overline{uv}$ is called a \dfn{diagonal extension}. Denote by $D$ the set of diagonal extensions, and note that $|D|=O(n^2)$. Denote by $V_D$ the set of intersection points between diagonal extensions from $D$, and note that $V_D$ contains $V$, and that $|V_D|=O(n^4)$. Let $\A_D$ be the arrangement of segments from $D$, and denote by $\U$ the set of faces in $\A_D$.

\paragraph{Restricted convex cover.} 
We say that a polygon $Q\subseteq P$ is \dfn{restricted} if it is \textbf{convex}, its vertices are from $V_D$ and its edges are subsegments of the diagonal extensions in $D$. Denote by $\R$ the set of all restricted polygons.
Observe that any restricted polygon is a connected union of a subset of faces from $\U$.
In \Cref{sec:restricted_convex_cover}, we prove that for any convex polygon $Q$ in $P$, there exist at most three restricted polygons that cover it (see \Cref{thm:approx}). This means that there exists a subset $S^*\subseteq\R$ of restricted polygons such that $S^*$ covers $P$ (i.e. $\bigcup_{Q\in S^*} Q=P$), and $S^*\le 3|\OPT|$.
We therefore define the \dfn{restricted convex cover} (RCC) problem as follows. Given a polygon $P$, find a set of restricted polygons that together cover $P$. By \Cref{thm:approx}, an optimal solution for RCC is a $3$-approximation for MCC. We can now formulate RCC as set cover problem, where the set of elements is $\U$, and the collection of sets is $\R$.

\paragraph{Peeling rotten potatoes for greedy set cover.} To get an $O(\log n)$-approximation algorithm for RCC (and by that, a $O(\log n)$ approximation to MCC), we run the greedy algorithm that repeatedly chooses a restricted polygon that covers the largest number of faces from $\U$ that were not covered yet. It is well-known that this greedy algorithm returns a $O(\log |\U|)=O(\log n)$ approximation. 
At each greedy step, we need to solve the following problem. 
Given the set of faces $\U$, with a binary function $w:\U\rightarrow\{0,1\}$, find a restricted polygon $Q\in\R$ that maximizes $w(Q)=\sum_{f\in \U, f\subseteq Q} w(f)$. In other words, find the restricted polygon that covers the largest number of faces $f\in \U$ with $w(f)=1$ (which would mean that they were not covered yet). 
We refer to this problem as the \dfn{rotten meshed potato peeling} problem.
In \Cref{sec:potato} we present an algorithm that given $P$, $\U$ and the function $w$, finds in $O(n^7)$ time at most two restricted polygons $Q_1,Q_2\in\R$ such that $w(Q_1\cup Q_2)\ge w(Q^*)$, where $Q^*$ is the restricted polygon that maximizes $w(Q^*)$.
We then use this algorithm as a sub-procedure in the greedy algorithm.
If the algorithm did not terminate before the $n$-th round, we can stop and simply return a triangulation of $P$ with $n+2h-2=O(n)$ triangles. Therefore, the total running time of our $O(\log n)$-approximation algorithm is $O(n^8)$.

The rotten meshed potato peeling problem may sound ad hoc; however, the algorithm that we develop, along with the geometric insights behind its construction, may be of independent interest. In fact, as stated in the introduction, in \Cref{sec:more-potatoes} we also present a more natural variant of the rotten potato peeling problem, which we solve using similar techniques.

\section{Restricted convex cover} \label{sec:restricted_convex_cover}
Our goal in this section is to show that the set $\R$ of restricted polygons contains a constant approximation for MCC.
As mentioned in the introduction, Eidenbenz and Widmayer~\cite{EidenbenzWidmayer03} define restricted polygons as polygons whose vertices are from $V_D$, but their edges may not be subsegments of diagonal extensions. However, we observe that by a slight modification to their proof of Theorem~1 in their paper, we get that any convex polygon in $P$ can be replaced by at most three restricted polygons, as in our definition. Since their proof spreads on over more than three pages and it mostly remains unchanged, we will only give a sketch of the similar parts of the proof, and detailed arguments only in the modified parts. 

Note that we include an additional (new) claim in the statement of the theorem, showing that if the given convex polygon contains a convex vertex of $P$, then it can be replaced by only \textbf{two} restricted polygons. This addition will be useful later in \Cref{sec:more-potatoes}, where we present a $1/4$-approximation algorithm for a more natural version of the rotten potato peeling problem.

\begin{theorem}[based on Theorem 1 from \cite{EidenbenzWidmayer03}]\label{thm:approx}
    Let $Q$ be a maximal convex polygon contained in a $P$. Then, there always exists at most three restricted polygons that cover $Q$. Moreover, if $Q$ contains a convex vertex of $P$, then it can be covered by at most two restricted polygons.
\end{theorem}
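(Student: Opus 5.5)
We first normalize $Q$ using maximality. Since $Q$ is a maximal convex polygon in $P$, it cannot be enlarged, so each edge of $Q$ is blocked by $\partial P$. Following Eidenbenz and Widmayer, we show that every edge $e$ of $Q$ lies on a line through at least one reflex vertex of $P$ (or contains an edge of $P$). Otherwise $e$ could be translated or rotated outward and $Q$ enlarged. We classify edges by the number of vertices of $P$ on their supporting line. An edge whose supporting line contains two vertices of $P$ is already a subsegment of a diagonal extension in $D$. The remaining edges touch $P$ at a single reflex vertex $r_e$, and this pinning is what prevents enlargement.

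Next we replace each single-pinned edge $e$. Rotating $e$ about $r_e$ moves its line in one of two directions. We rotate inward, towards $Q$, until the line hits a second vertex of $P$. The rotated line, clipped to $P$, then lies on a diagonal extension in $D$, and it cuts off a small region of $Q$ near one endpoint of $e$. Doing this for all pinned edges gives a polygon $Q'\subseteq Q$. $Q'$ is convex, being an intersection of halfplanes with $Q$, and all its edges lie on segments of $D$, so its vertices are in $V_D$. Thus $Q'$ is restricted, and it covers $Q$ except for pockets near the vertices of $Q$.

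We then cover the lost pockets. The key step, as in the original Theorem~1, is to choose the rotation direction of each pinned edge consistently, say all clockwise around $\partial Q$. With this choice the lost pockets form a chain, and they can be covered by a second and a third restricted polygon. Each of these is built by rotating the pinned edges in the opposite direction, or by using the boundary of $Q'$ as a new edge. The main obstacle is this step: we must verify that each pocket lies in a convex region bounded only by segments of $D$. Eidenbenz and Widmayer allowed edges not on diagonal extensions, so we must redo their argument so that the new edges we introduce, namely the rotated lines, are also diagonal extensions. Our plan is to argue that every pocket is a triangle bounded by two rotated lines and a diagonal extension through $r_e$, and then to group the pockets into two convex unions.

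Finally, for the \emph{moreover} part, suppose $Q$ contains a convex vertex $c$ of $P$. Then the two edges of $P$ at $c$ bound $Q$ near $c$, and both lie on diagonal extensions. We break the cyclic chain of rotations at $c$: we rotate the pinned edges clockwise on one side of $c$ and counterclockwise on the other. The pockets then all lie in two monotone families that can be absorbed into $Q'$ and a single additional restricted polygon. This gives at most two restricted polygons.
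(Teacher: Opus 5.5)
There is a genuine gap, and it appears already in your second step, before the pocket argument you flag as the main obstacle.

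\textbf{The inward rotation does not produce a restricted polygon.} By your own maximality reasoning, an edge $e=\overline{ab}$ whose supporting line $\ell$ meets only one vertex $r_e$ of $P$ must contain $r_e$ in its relative interior. If $r_e$ were an endpoint, a thin triangle along $e$ could be added to $Q$. For example, take a triangle whose three edges are each touched at their midpoints by a reflex vertex of $P$, with its corners in the interior of $P$; it is maximal and every edge is single-pinned. Rotating $\ell$ about an interior point of $e$ is never an inward move: one half of the line enters $Q$ and the other half leaves it. So $Q\cap H'_e$ loses a sliver near one endpoint, say $a$, but keeps the piece $\overline{r_e b}$ of the original edge on its boundary. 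That piece lies on $\ell$, which is not a diagonal extension. Hence $Q'$ is not restricted, whichever direction you choose.

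\textbf{No cover that stays inside $Q$ can work.} A point of $e$ lying on no segment of $D$ (all but finitely many points of $e$ are such) can belong to a restricted polygon only as an interior point. So almost every point of $e$ must be covered by a restricted polygon that crosses $\ell$. The real work is to build polygons that go \emph{outside} $Q$, stay in $P$, and are few in number. Your proposal leaves exactly this as a plan. The planned description of a pocket is also inconsistent: its side through $r_e$ is $\ell$ itself, not a diagonal extension.

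\textbf{The second part rests on a false claim.} You assert that the edges of $P$ at a convex vertex $c$ bound $Q$ near $c$. This fails in general, since the edges of $Q$ at $c$ may lie strictly inside the angle of $P$ at $c$.

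\textbf{How the paper's proof goes instead.} It works outward. It first pushes the expandable edges of $Q$ outward until no three consecutive vertices are non-$P$-vertices, and any two consecutive non-$P$-vertices lie on a common edge of $P$. Rotations then pivot at $P$-vertices, which are endpoints of the edges being rotated. So the edge leaving a $P$-vertex can be rotated outward, away from $Q$, until it hits another vertex of $P$ or aligns with the incident edge of $P$; this only enlarges the polygon. A three-case analysis shows that every subchain between consecutive $P$-vertices becomes a convex chain on diagonal extensions with vertices in $V_D$. The result is a possibly non-convex $Q''\supseteq Q$ whose reflex angles occur only at reflex vertices of $P$.

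Then $Q''$ is covered by at most three polygons:
\begin{itemize}
\item the hull $\hat Q$ of the $P$-vertices with the even-indexed subchains attached;
\item $\hat Q$ with the odd-indexed subchains attached;
\item when the number of subchains is odd, one extra polygon.
\end{itemize}
A convex $P$-vertex lets that extra subchain be merged with a neighbor, giving two. The constants three and two come from this alternation and parity argument, not from a consistent choice of rotation direction.
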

\begin{proof}[Proof]
    The proof starts by showing how to expand a given convex polygon $Q$ to another convex polygon that contains it, by iteratively expanding the edges of $Q$. 
    We call a vertex of $Q$ a $P$-vertex if it is also a vertex of $P$. An edge $\overline{ab}$ of $Q$ is \emph{expandable} if both $a,b$ are not $P$-vertices, and do not lie on the same edge of $P$. An expandable edge can be ``pushed'' (parallel to itself) outwards from $Q$ while remaining between the extensions of its neighbor edges, until it touches a vertex or an edges of $P$ and stops being expandable (or until it becomes a vertex - the intersection point of the two extensions).
    
    After expanding all the expandable edges, we get a polygon $Q'$ with the property that there are no three consecutive vertices of $Q'$ that are not $P$-vertices, and if there are two consecutive non-$P$-vertices, then they must lie on the same edge of $P$.
    
    \begin{figure}[h!]
        \centering
        \includegraphics[scale=1]{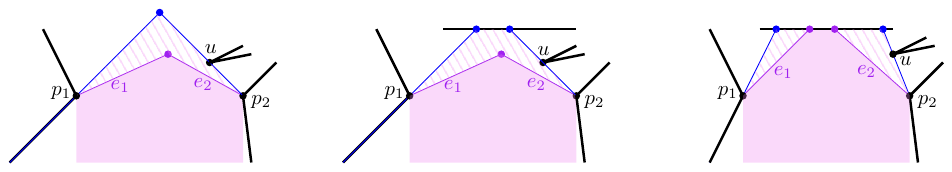}
        \caption{Constructing a non-convex polygon $Q''$ that contains $Q'$.}
        \label{fig:restricted_approx}
    \end{figure}
    The next step is to construct a non-convex polygon $Q''$ that contains $Q'$, and in which all the edges lie on diagonal extensions of $P$. This is done by separately expanding each subchain of the boundary of $Q'$ between two consecutive $P$-vertices. This is the only step where we need a slightly different argument than the one in \cite{EidenbenzWidmayer03}. Let $p_1$ and $p_2$ be two consecutive $P$-vertices of $Q$ (see \Cref{fig:restricted_approx}). 
    Then between $p_1$ and $p_2$ there is either a single non-$P$-vertex, or two non-$P$-vertices that lie on the same edge of $P$ (or no vertex at all, in which case we are done). 
    Let $e_1$ be the edge from $p_1$ to the next vertex of $Q$, and $e_2$ the edge from $p_2$ to the vertex of $Q$ before $p_2$. 
    The subchain of the boundary of $Q'$ between $p_1$ and $p_2$ splits $P$ into two subpolygons, where $Q'$ is contained in one of them. Let $e'_1$ (resp. $e'_2$) be the edge of $P$ incident to $p_1$ (resp. $p_2$) in the subpolygon that contains $Q'$.
    Rotate the ray from $p_1$ through $e_1$ in the direction outwards from $Q$, until it either touches another vertex $u$ of $P$ (and thus aligned with a diagonal extension of $\overline{pu}$), or it is aligned with the diagonal extension $e'_1$.
    Do the same for $p_2$.
    There are three cases as depicted in the figure above. 
    \begin{enumerate}[noitemsep]
        \item There is a single non $P$-vertex between $p_1$ and $p_2$, and the diagonal extensions intersect.
        \item There is a single non $P$-vertex between $p_1$ and $p_2$, and the diagonal extensions do not intersect. In this case they must reach the same edge of $P$, as otherwise the rays would stop rotating when reaching that vertex.
        \item There are two non-$P$-vertices that lie on the same edge $e$ of $P$. In this case both rays must intersect $e$ as well, because they stop rotating when reaching a vertex of that edge.
    \end{enumerate}
    In each case, we can extend the subchain by replacing it with the intersection point of the diagonal extensions in case 1, or by the intersection points of the diagonal extension with an edge of $P$ (which also lies on a diagonal extension) in cases 2 and 3. In all cases, the edges of $Q$ between $p_1$ and $p_2$ lie on diagonal extensions, and the vertices are from $V_D$.
    
    The last step is exactly as in \cite{EidenbenzWidmayer03}: construct at most three convex polygons that contain $Q''$ (see \Cref{fig:restricted_approx2}). Denote by $\hat{Q}$ the convex polygon whose vertices are all the $P$-vertices of $Q''$. Two polygons, $Q_{\text{even}}$ and $Q_{\text{odd}}$, are constructed as follows. $Q_{\text{even}}$ is constructed by adding to $\hat{Q}$ the even subchains between $P$-vertices, and $Q_{\text{odd}}$ is constructed by adding to $\hat{Q}$ the odd subchains (see the figure below). Notice that there must be an even number of subchains in each of the polygons $Q_{\text{even}}$ and $Q_{\text{odd}}$, because connecting two consecutive $P$-vertices may result in a non-convex polygon. Therefore, if there is an odd number of subchains (i.e., an odd number of vertices in $\hat{Q}$), then we add another polygon that contains a single subchain (between two consecutive $P$-vertices of $Q$). The edges of these polygons are either diagonals of $P$, or edges of $Q''$, so all three polygons are restricted polygons (by our definition). 

    \begin{figure}[h!]
        \centering
        \includegraphics[scale=1]{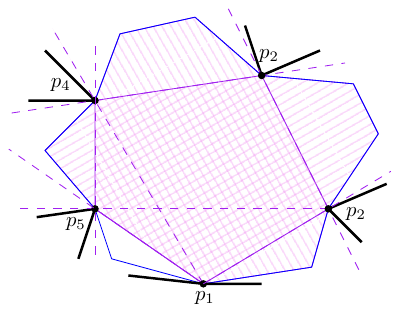}
        \caption{Constructing at most three convex polygons that contain $Q''$.}
        \label{fig:restricted_approx2}
    \end{figure}
    To prove the second part of the theorem, observe that $Q''$ can only have reflex vertices at reflex vertices of $P$, because the angles of the vertices on each subchain between two consecutive $P$-vertices are convex. If $Q''$ has a convex $P$-vertex $p_1$ (as in \Cref{fig:restricted_approx2}), then the angle at $p_1$ (that connects two such consecutive subchains) is also convex. Therefore, in this case, we can construct $Q_{\text{even}}$ and $Q_{\text{odd}}$ as above, but such that the remaining third polygon includes $p_1$. Then, since the angle to one of its neighbor subchains is convex, we can add it to the polygon that covers this subchain, and thus cover $Q''$ with at most two convex polygons.
\end{proof}

We conclude with the following theorem.
\begin{theorem}
    There exists a subset $S^*\subseteq\R$ of restricted polygons such that $S^*$ covers $P$ (i.e. $\bigcup_{Q\in S^*} Q=P$), and $S^*\le 3|\OPT|$.
\end{theorem}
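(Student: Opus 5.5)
The plan is to derive this directly from \Cref{thm:approx}, applied to every member of an optimal cover. Start with a minimum convex cover $\OPT$ of $P$. Since \Cref{thm:approx} is stated for \emph{maximal} convex polygons in $P$, first replace each $Q\in\OPT$ by a maximal convex polygon $\bar{Q}$ with $Q\subseteq\bar{Q}\subseteq P$. One way to obtain $\bar{Q}$ is to take a maximal element, under inclusion, of the family of convex polygons in $P$ that contain $Q$. Alternatively, observe that the expansion procedure at the start of the proof of \Cref{thm:approx} already turns any convex $Q$ into a containing convex polygon with the required structure. The collection $\{\bar{Q} : Q\in\OPT\}$ still covers $P$, since each $\bar{Q}\supseteq Q$, and it has at most $|\OPT|$ members.

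Next, apply \Cref{thm:approx} to each $\bar{Q}$. This gives at most three restricted polygons $R_1^Q,R_2^Q,R_3^Q\in\R$ with $\bar{Q}\subseteq R_1^Q\cup R_2^Q\cup R_3^Q$. Each $R_i^Q$ lies inside $P$, because restricted polygons are by definition contained in $P$. Define
\[ S^*=\bigcup_{Q\in\OPT}\{R_1^Q,R_2^Q,R_3^Q\}. \]
Then $\bigcup_{R\in S^*}R\supseteq\bigcup_{Q\in\OPT}\bar{Q}\supseteq\bigcup_{Q\in\OPT}Q=P$. Conversely, every $R\in S^*$ is a subset of $P$, so the union equals $P$ exactly. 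Counting gives $|S^*|\le 3|\OPT|$.

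The main obstacle is the first step, ensuring that maximal convex polygons exist and are polygons. The family of convex subsets of $P$ containing $Q$ is closed under increasing unions, and their closures remain in the compact set $P$. By Zorn's lemma, or by a compactness argument in the Hausdorff metric, a maximal convex set exists. I still need to argue that this set is a polygon, i.e.\ that its boundary consists of finitely many segments. For this, note that every boundary point of a maximal convex set either touches $\partial P$ or lies on a supporting line through a reflex vertex of $P$, and there are finitely many such vertices. If that argument gets messy, I would fall back on the constructive route: use the edge-expansion step of the proof of \Cref{thm:approx}, which applies to any convex polygon and yields the polygon $Q'$ that the rest of that proof actually uses. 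Either way, the rest is bookkeeping.
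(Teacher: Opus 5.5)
Your proposal is correct and matches the paper, which states this theorem as an immediate corollary of \Cref{thm:approx}: apply that theorem to each member of $\OPT$ and take the union of the at most $3|\OPT|$ restricted polygons it produces. On the maximality hypothesis, the paper effectively takes your fallback route, since the proof of \Cref{thm:approx} begins by expanding an arbitrary convex polygon and \Cref{sec:outline} invokes the theorem for any convex $Q\subseteq P$, so the Zorn/compactness detour is not needed.
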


\section{Peeling rotten potatoes for greedy set cover}\label{sec:potato}

Our main goal in this section is to solve the following problem.
\begin{problem}[Rotten Meshed Potato Peeling]\label{prb:meshed-potatoes}
    Given a polygon $P$ with $n$ vertices, the set of faces $\U$ (as defined in \Cref{sec:outline}), and a function $w:\U\rightarrow \{0,1\}$, find a restricted polygon $Q$ in $P$ that maximizes $\sum_{f \in \U, f \subseteq Q} w(f)$.
\end{problem}
To simplify the presentation, in this section we call such a restricted polygon $Q$ a \dfn{largest restricted polygon}.

As mentioned in \Cref{sec:outline}, in order for the solution to be utilized for the greedy MCC algorithm, we present an algorithm that gets the function $w$ as an input, and computes the largest restricted polygon. The function $w$ is then updated accordingly. Our algorithm is based on constructing a weighted DAG for each vertex of $P$, and then finding a heaviest path in the DAG.

\paragraph{Peeling visibility polygons.}
For a point $x\in P$, the visibility polygon of $x$ is defined as $\VP(x)=\{p\in P\mid \overline{xp}\subseteq P\}$.
A key insight in our construction is that it is sufficient to search for the largest restricted polygon within the visibility polygons of the vertices of $P$. 

This is because in the proof of \Cref{thm:approx}, the construction of restricted polygons ensures that each such polygon contains at least one vertex of $P$.
In other words, there exists a set of $3|OPT|$ restricted polygons that cover $P$, and each of them contains a vertex of $P$. We can therefore further restrict our set $\R$ to polygons that contain a vertex of $P$. 
Since restricted polygons are convex, if a restricted polygon $Q$ contains a vertex $v$ of $P$, then $Q$ must lie entirely within the visibility polygon of $v$.

\begin{observation}
    Let $Q$ be a restricted polygon that contains a vertex $v$ of $P$, then $Q\subseteq\VP(v)$.
\end{observation}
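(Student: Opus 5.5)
The plan is to argue directly from the definitions: the claim follows from convexity of $Q$ together with $Q\subseteq P$. Since $Q$ is a restricted polygon, it is by definition convex and contained in $P$. Fix an arbitrary point $p\in Q$. We must show $p\in\VP(v)$, which by definition means $\overline{vp}\subseteq P$.

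The key step is to use convexity. Both $v$ and $p$ belong to $Q$, and $Q$ is convex, so the whole segment lies in $Q$: $\overline{vp}\subseteq Q$. Since $Q\subseteq P$, it follows that $\overline{vp}\subseteq P$. Hence $p\in\VP(v)$. As $p$ was arbitrary, $Q\subseteq\VP(v)$.

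There is no real obstacle. The only point that needs care is the meaning of ``contains a vertex'' and of ``$\subseteq P$.'' Here $P$ and $Q$ are treated as closed sets, boundaries included, so $v\in Q$ makes sense even when $v$ lies on $\partial Q$. The segment $\overline{vp}$ is also allowed to run along $\partial P$, for instance along an edge of $P$ incident to $v$. This matches the definition $\VP(x)=\{p\in P\mid \overline{xp}\subseteq P\}$ used in the paper, where $P$ is closed. With this convention, the two-line argument above is complete.
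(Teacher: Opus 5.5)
Your proof is correct and is essentially the paper's own argument: the paper justifies the observation in one sentence, noting that restricted polygons are convex and lie in $P$, so any one containing $v$ lies in $\VP(v)$. You spell out the same step: for $p\in Q$, convexity gives $\overline{vp}\subseteq Q\subseteq P$.
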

Therefore, it is enough to iterate over all the vertices of $P$, and find for each vertex $v$ the largest restricted polygon in $\VP(v)$. Then we can take the maximum over all the restricted polygons that were found.

\subsection{DAG construction for convex vertices} \label{sec:dag_const}
Fix a vertex $v$ of $P$, and consider its visibility polygon $P_v=\VP(v)$. We assume in this section that $v$ is a convex vertex (i.e., the internal angle at $v$ is at most $180^\circ$).

Let $D_v=\{l\cap P_v\}_{l\in D}$ be the set of diagonal extensions restricted to $P_v$ (see \Cref{fig:SSTs}). The segments in $D_v$ are subsegments of the segments in $D$, and are maximal w.r.t. $P_v$. In other words, each segment in $D_v$ has both its endpoints on $\partial P_v$ (the boundary of $P_v$). 

\paragraph{Directed subsegments.} We now assign directions to the segments in $D_v$, as follows. For any segment $\overline{ab}\subseteq P_v$ with both endpoints $a,b$ on $\partial P_v$, such that $a$ appears before $b$ when traversing the boundary in counterclockwise order from $v$, and $a\neq b\neq v$, the direction is from $a$ to $b$, and we denote the directed segment by $\overrightarrow{ab}$. If, e.g., $a=v$, then the direction is $\overrightarrow{vb}$.
For each directed segment $l=\overrightarrow{ab}\in D_v$, let $a=q_1,q_2,\dots,q_t=b$ be the sequence of points from $V_D$ on $l$, by their order along $l$. Every two consecutive points $q_i,q_{i+1}$ define a \dfn{directed subsegment} of $l$.

\paragraph{Weighted cell Partition.} Denote by $B_v$ the set of extensions of segments $\overline{vp}$ such that $p\in V_D\cap P_v$ (see \Cref{fig:SSTs}).
That is, $B_v$ is the set of extensions of all the segments connecting $v$ with a point from $V_D$ in its visibility polygon. We call such an extension a \dfn{visibility extension}. Notice that $|B_v|=O(n^4)$ because $|V_D|=O(n^4)$.
Consider the set of cells $C_v$ in $P_v$ formed by the segments in $D_v\cup B_v$.
This set of cells is a refinement of the arrangement $\A_D$, and we refer to it as the \dfn{cell partition} of $P_v$. Each cell $c \in C_v$ is contained in some face $f\in\U$. For a face $f\in\U$ that contains a set $C'\subseteq C_v$ of cells, we set the weight of each cell $c\in C'$ to be $w(c)=\frac{w(f)}{|C'|}$. That is, if $w(f)=0$ then the weight of each cell $c$ in the face $f$ is $0$, and otherwise it is one over the number of cells in $f$.

Notice that the segments in $B_v$ have a common intersection point (at $v$), and each of them may intersect each of the segments in $D_v$. Because $|B_v|=O(n^4)$ and $|D_v|=O(n^2)$, the total number of intersection points of segments in $D_v\cup B_v$ is $O(n^6)$, and thus we have $|C_v| = O(n^6)$.

\paragraph{Subsegment triangles.}
For a directed subsegment $s=\overrightarrow{ab}$ of a segment in $D_v$, let $T_v(s)=\triangle vab$ be the \dfn{subsegment triangle} (SST) of $s$ in $P_v$. Notice that by definition, the segments $\overline{va}$ and $\overline{vb}$ are lying on visibility extensions from $B_v$, because $a,b\in V_D$. 
Therefore, any such SST is a connected subset of cells from $C_v$ (i.e., they are connected in the corresponding dual graph). Denote by $C_v(s)$ the set of cells from $C_v$ that are contained in the SST $T_v(s)$. We set the weight of $T_v(s)$ to be $w(T_v(s))=\sum_{c \in C_v(s)} w(c)$.

    \begin{figure}[h]
        \centering
        \includegraphics[page=1, scale=1.2]{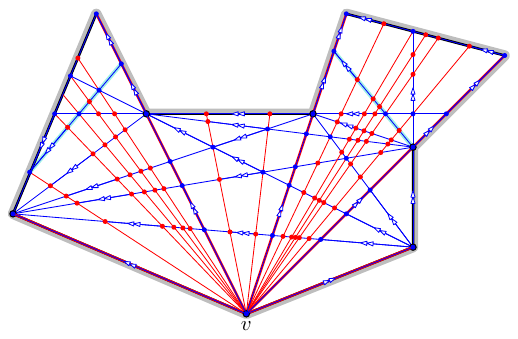}
        \caption{A visibility polygon $P_v$ with the diagonal extensions in $D_v$ (blue), and visibility extensions $B_v$ (red). Two diagonal extensions are highlighted in turquoise, to show that diagonal extensions are coming from the entire polygon $P$. The blue points are the points of $V_D$, and the red points are the points of $W$. Each SST is a connected subset of cells from $C_v$.}
        \label{fig:SSTs}
    \end{figure}

\paragraph{} Using the above definitions, we construct a weighted directed acyclic graph $G=(V(G),E(G),w')$ as follows. %
The set of vertices of $G$ is the set of all directed subsegments in $P_v$, i.e. $$V(G)=\{s\mid s \text{ is a directed subsegment of some } l\in D_v\}.$$
For a segment $s\in V(G)$, we let $w'(s)=w(T_v(s))$.
There is a directed edge $e=(s_1,s_2)$ in $E(G)$ between two directed subsegments $s_1,s_2\in V(G)$ if $s_1 = \overrightarrow{ab}$, $s_2=\overrightarrow{bc}$, and $c$ lies on the left side of the directed line through $\overrightarrow{ab}$. In other words, when walking from $a$ to $b$ and then to $c$ we make a left-turn --- the angle $\angle abc$ is at most $180^\circ$. 
In the rest of the paper, whenever we refer to the angle $\angle abc$ between two directed subsegments $\overrightarrow{ab}$, $\overrightarrow{bc}$, we mean the left-turn angle as defined here.

To show that $G$ is a DAG, we first need the following observation.
\begin{observation}\label{obs:order}
    Let $P$ be a simple polygon, and $s_1=\overline{ab}$, $s_2=\overline{cd}$ two maximal segments in $P$ (i.e, their endpoints $a,b,c,d$ are on $\partial P$). If $s_1$ and $s_2$ cross, then their counterclockwise order along the boundary of $P$ is either $a,c,b,d$ or $a,d,b,c$.
\end{observation}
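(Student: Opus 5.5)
The plan is a Jordan-curve separation argument. Since $P$ is simple, $\partial P$ is a simple closed curve. A maximal segment $s_1=\overline{ab}$ whose relative interior lies in the interior of $P$ cuts $P$ into two subpolygons $P_1$ and $P_2$. Their boundaries are $s_1$ together with the two boundary arcs of $\partial P$ between $a$ and $b$. Call these arcs $\gamma_1$ (from $a$ to $b$ counterclockwise) and $\gamma_2$ (from $b$ back to $a$).

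First I will justify that the cut is clean. The segment $s_1$ lies in $P$ and touches $\partial P$ only at $a$ and $b$. Degenerate cases, where the segment runs along the boundary or touches a reflex vertex in its interior, will be handled by a general-position convention, or by noting that crossing means the interiors genuinely cross. So $s_1$ is a chord of the disk $P$, and by the Jordan curve theorem (or the theta-curve lemma) it splits $P$ into exactly two regions. Each region is bounded by $s_1$ and one of the arcs.

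Next I use the crossing. The segments $s_1$ and $s_2$ cross at a point $x$ that is interior to both, and transversally, since the segments are not collinear. Near $x$, the segment $s_2$ therefore passes from $P_1$ to $P_2$. Now $s_2$ is a single segment, and a line meets the line through $s_1$ at most once. So the part $\overline{cx}$ minus $x$ lies entirely on one side of $s_1$, and $\overline{xd}$ minus $x$ lies entirely on the other. Consequently $c$ lies in the closure of one region and $d$ in the closure of the other.

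Because $c$ and $d$ lie on $\partial P$, one endpoint lies on $\gamma_1$ and the other on $\gamma_2$. Neither equals $a$ or $b$, since then the segments would share an endpoint rather than cross; I will treat that as non-crossing, or adopt a convention for it.

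Finally I translate this into cyclic order. Starting the counterclockwise traversal at $a$, we first meet the points of $\gamma_1$, then $b$, then the points of $\gamma_2$. If $c\in\gamma_1$ and $d\in\gamma_2$, the order is $a,c,b,d$; in the other case it is $a,d,b,c$.

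The main obstacle is the rigorous handling of degeneracies. These are segments that touch $\partial P$ at interior points, such as reflex vertices grazed by a maximal segment, and the question of whether a region boundary point counts as lying on the arc. I would first try to show that the subsegment of $s_2$ from $x$ to its endpoint cannot meet $\partial P$ before that endpoint, by maximality, so its first boundary contact is exactly $c$ (respectively $d$). That contact lies on the boundary of the corresponding region minus $s_1$, which is contained in the corresponding arc.
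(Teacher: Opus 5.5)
Your main argument is the paper's proof with more detail: $s_1$ cuts $P$ into two pieces, the crossing at $x$ puts $c$ and $d$ into different pieces, and the two boundary arcs from $a$ give the order $a,c,b,d$ or $a,d,b,c$. Under the assumption that $s_1$ meets $\partial P$ only at $a$ and $b$, this is complete. Your extra checks are also right: the four points are distinct, and the two halves of $s_2$ lie strictly on opposite sides of the line through $s_1$.

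The step that would fail is your plan for the degenerate cases. Maximality does \emph{not} keep a maximal segment off $\partial P$ in its interior. Such a segment can pass through reflex vertices or contain whole edges of $P$. The segments the paper applies this observation to do so by construction: the extension of $\overline{uv}$ contains the vertices $u$ and $v$ and often continues past them, and every edge of $P_v$ lies on a segment of $D_v$. So the claim that the subsegment from $x$ to $c$ first meets $\partial P$ at $c$ is false. A general-position convention is not available either, since the segments are dictated by $P$.

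For $s_2$ this does no harm, because you never needed that claim. The set $(x,c]$ is connected and disjoint from $s_1$, so it stays in one component of $P\setminus s_1$ however often it touches $\partial P$.

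For $s_1$ there is a real gap. If $s_1$ touches $\partial P$ at interior points, $P\setminus s_1$ has more than two components. You must still show that the components abutting $s_1$ from $c$'s side meet $\partial P\setminus s_1$ only within one of the two open arcs between $a$ and $b$, and those abutting from $d$'s side only within the other. One way is to cut $P$ one at a time along the maximal open subsegments of $s_1$ lying in the interior of $P$. Each is a genuine chord, where your Jordan argument applies, and you track the boundary order through the cuts.

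The paper's one-line proof makes the same silent ``two subpolygons'' assumption. So this is a gap you share with it, not a departure from its approach.
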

\begin{proof}
    The segment $s_1=\overline{ab}$ splits the polygon into two sub polygons, and in order for the segments $s_1$ and $s_2$ to cross, the points $c,d$ have to lie in different subpolygons. Therefore, when traversing $\partial P$ in counterclockwise order from $a$, we first meet either $c$ or $d$, and then $b$.
\end{proof}

\begin{claim}
    The graph $G$ defined above is a DAG.
\end{claim}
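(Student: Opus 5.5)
We will show that $G$ has no directed cycle. The plan is to find a potential function on the vertices of $G$ that strictly increases along every edge; a natural candidate is the angular position, as seen from $v$, of the endpoint of a directed subsegment.

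\textbf{Setup.} Since $v$ is a convex vertex of $P$, the visibility polygon $P_v$ is star-shaped from $v$. It lies within the wedge at $v$ spanned by the two edges of $P$ incident to $v$, and this wedge has angle at most $180^\circ$. So every point $p\in P_v\setminus\{v\}$ has a well-defined angle $\theta(p)\in[0,\pi]$, measured from the first edge of $P$ incident to $v$ in CCW order. Because $P_v$ is star-shaped from $v$, traversing $\partial P_v$ counterclockwise from $v$ visits boundary points in non-decreasing order of $\theta$.

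\textbf{Step 1: each directed segment turns counterclockwise about $v$.} Take $\overrightarrow{ab}\in D_v$, with $a$ before $b$ along $\partial P_v$. We claim that $v$ lies weakly to the left of the directed line through $\overrightarrow{ab}$. Equivalently, $\theta$ is non-decreasing, and strictly increasing unless $v$ is on the line, as we move from $a$ to $b$. To see this, apply \Cref{obs:order} to $P_v$, which is simple because it is star-shaped. The segment $\overline{ab}$ splits $P_v$ into two pieces, and the boundary arc from $a$ to $b$ that avoids $v$ is the one swept in increasing $\theta$. This forces the orientation.

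\textbf{Step 2: the potential.} For $s=\overrightarrow{ab}$ define $\phi(s)=(\theta(b),\,|vb|)$, compared lexicographically. We may ignore degenerate subsegments lying on a line through $v$: the SST $\triangle vab$ is then degenerate, and the order along such a line is by distance from $v$ in the direction of traversal. Take an edge $(\overrightarrow{ab},\overrightarrow{bc})$. By Step 1 applied to the segment of $D_v$ containing $\overrightarrow{bc}$, we get $\theta(c)\ge\theta(b)$. If $\theta(c)>\theta(b)$, the potential strictly increases. If $\theta(c)=\theta(b)$, then $\overrightarrow{bc}$ lies on a ray from $v$. Moving along the ray must then go away from $v$: moving toward $v$ would reach $v$ or the boundary point before $b$, contradicting the CCW order of the endpoints. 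So $|vc|>|vb|$.

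A cycle in $G$ would return to its starting potential, which is impossible. Hence $G$ is a DAG. The left-turn condition on edges is not even needed for acyclicity, but it is consistent with the argument.

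\textbf{Main obstacle.} The hardest step is Step 1: showing rigorously that the CCW boundary order from $v$ matches increasing $\theta$, and that the direction $a\to b$ sweeps counterclockwise around $v$. This relies on the convexity of $v$, which prevents a wrap-around past $180^\circ$. I would argue it by noting that the ray from $v$ meets $\partial P_v$ in a connected set, using star-shapedness.
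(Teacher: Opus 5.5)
Your argument is correct in outline and takes a genuinely different route from the paper. The paper argues combinatorially. For consecutive vertices of a putative cycle it applies \Cref{obs:order} to their supporting segments and uses the left-turn rule to exclude one of the two interleavings. It concludes that the starting endpoints advance strictly in counterclockwise order from $v$, which cannot happen around a cycle. As its Remark notes, this uses only that $P_v$ is simple. You instead use that $P_v$ is star-shaped from the convex vertex $v$, and give an explicit potential (angle, then distance, from $v$) that increases along every edge. This buys three things:
\begin{itemize}
\item it yields a concrete topological order;
\item it handles uniformly the transitions that the paper's ``consecutive segments cross'' passes over, namely straight continuation along one segment, or a turn at a point of $\partial P_v$;
\item it shows the left-turn rule is irrelevant for acyclicity.
\end{itemize}
The price is that it does not extend to arbitrary simple polygons.

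However, the delicate step is not Step 1 but the tie case of Step 2, and your justification there does not work. Step 1 follows at once from monotonicity of $\theta$ along $\partial P_v$ and along any segment avoiding $v$, without \Cref{obs:order}. In Step 2, counterclockwise order does not force radial chords to point away from $v$. A window $\overline{rp}$ of $P_v$ through a reflex vertex $r$ is traversed from $p$ to $r$, i.e.\ \emph{toward} $v$, whenever the region it hides lies at larger $\theta$. So a chord with both endpoints on such a window would be directed toward $v$ with no conflict with the boundary order.

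What rules this out is maximality of diagonal extensions. Suppose $\sigma\in D_v$ lies on a line through $v$, and let $x$ be its endpoint farther from $v$. Then $\overline{vx}\subseteq P$ overlaps the diagonal extension $l\supseteq\sigma$ in a segment of positive length. So $l\cup\overline{vx}$ is a segment in $P$, and maximality gives $v\in l$. Hence $l\subseteq P_v$, $\sigma=l$, and $v$ is an endpoint of $\sigma$, so the convention directs $\sigma$ away from $v$. This also shows no subsegment has head $v$, so $\phi$ is always defined.

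That last step needs the angle at $v$ to be strictly less than $180^\circ$. Your setup and the paper's definition of a convex vertex both admit the flat case. It also occurs in one half when a reflex vertex is split along the extension of an incident edge. There, the extension of the two collinear edges passes through $v$ and its first half is directed toward $v$, so your potential decreases along the corresponding edges of $G$. You should exclude that case or treat that segment separately; it forms an isolated chain, since no left turn leaves or enters it.
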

\begin{proof}
    Assume by contradiction that $G$ is not acyclic, so there exists a cycle $\pi=\{s_1,s_2,\dots,s_k\} \in G$. First notice that the vertex $v$ cannot participate in $\pi$, because it has no incoming edges.
    Denote the endpoints of the directed segment that defines the subsegment $s_i\in \pi$ by $\overrightarrow{a_ib_i}$. By the definition of $G$, every two consecutive directed segments $\overrightarrow{a_ib_i}$, $\overrightarrow{a_{i+1}b_{i+1}}$ cross each other.
    Since $P_v=\VP(v)$ is a simple polygon, by \Cref{obs:order} we get that their counterclockwise order along $\partial P_v$ is $a_i, a_{i+1}, b_i, b_{i+1}$ ($a_i, b_{i+1}, b_i, a_{i+1}$ is not possible by the definition of $E(G)$, because then the direction of $s_i,s_{i+1}$ would be toward their intersection point). Assume w.l.o.g. that $a_2$ is the first endpoint on $\partial P_v$ after $v$. Then the order along $\partial P_v$ is $a_1, v, a_2, b_1, b_2$, but this is in contradiction to the direction of $\overrightarrow{a_1b_1}$.
\end{proof}

\begin{remark}
    Note that in the above claim we did not use the fact that $P_v$ is a visibility polygon, and interestingly, in fact $G$ is a DAG for any simple polygon and a set of maximal segments.
\end{remark}

\paragraph{} Next, we show how to construct $G$ for the vertex $v$, given $D$ and the arrangement $\U$.
First notice that any two diagonal extensions that intersect are responsible for exactly four edges in $E(G)$, and thus $|E(G)|=O(n^4)$. Moreover, we can construct $V(G)$ and $E(G)$ in $O(n^4)$ time by using standard data structures for representing planar maps, such as DCEL \cite{de2008computational}. 

The set $B_v$ of visibility extensions can be computed in $O(n^4)$ time. Then, constructing the weighted cell partition $C_v$ can be done in $O(n^6)$ time by sweeping the faces of the arrangement $\U$ according to the segments in $B_v$. During the sweeping process we also keep pointers between cells in $C_v$ and their corresponding faces from $\U$. The weights of the cells in $C_v$ can be computed afterwards by simply traversing the faces in $\U$. After each iteration of the greedy MCC algorithm (in which we add a restricted polygon $Q$ to the cover), we need to update the weights of the cells in $C_v$. Recall that restricted polygons are connected unions of faces from $\U$. Whenever a restricted polygon $Q$ is selected for the cover, we update an indicator for each of these faces, recording whether it has already been covered or not. This step takes $O(n^4)$ time, and then we can update the weight of each cell $c\in C_v$ in constant time by simply checking if its corresponding face from $\U$ was covered. Notice that in this case, the weight $w(c)$ becomes $0$ (during the entire algorithm, the value of $w(c)$ changes only once).

Computing the weight function $w'$ is slightly more complicated. There are $O(n^4)$ SSTs, each consists of a set of $O(n^{6})$ cells from $C_v$, so a trivial approach would lead to a running time of $O(n^{10})$. 
In \Cref{lem:constructTime} below, we show that given the weights of all cells $c\in C_v$ we can compute the the weights of the SSTs in $O(n^6)$ time. 
This is done by computing the weight function for some auxiliary triangles, each is a connected union of cells from $C_v$.

\begin{lemma}\label{lem:constructTime}
    Given a visibility polygon $P_v$ and the cell partition $C_v$ with a weight $w(c)$ for every cell $c\in C_v$, the weights of all the SSTs in $P_v$ can be computed in $O(n^{6})$ time in total. 
\end{lemma}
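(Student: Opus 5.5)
The plan is to express every SST weight as a signed combination of weights of a small family of auxiliary triangles with apex $v$, which can all be computed by a radial sweep. Sort the visibility extensions in $B_v$ by angle around $v$, giving rays $r_1,\dots,r_m$ with $m=O(n^4)$. Consecutive rays bound a \emph{wedge} $W_j$ between $r_j$ and $r_{j+1}$. Each wedge is cut by the segments of $D_v$ crossing it into a linearly ordered sequence of cells, ordered by distance from $v$. This is where the construction uses that $P_v$ is a visibility polygon: every point of $P_v$ is seen from $v$, so the segments of $D_v$ cannot cross inside a wedge (their crossings lie in $V_D$ and hence on rays of $B_v$). Thus within $W_j$ the crossing subsegments are totally ordered from $v$ outward, and the cells of $W_j$ are exactly the trapezoids (and one triangle at $v$) between consecutive crossing subsegments.

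For each wedge $W_j$ and each subsegment piece $\sigma$ of $D_v$ crossing it, define the auxiliary triangle $\tau_j(\sigma)$: the part of $W_j$ between $v$ and $\sigma$. Its weight is a prefix sum along the ordered cells of $W_j$. All of these prefix sums are computed in time linear in the number of cells of $W_j$, so $O(|C_v|)=O(n^6)$ in total.

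Now take an SST $T_v(s)$ with $s=\overrightarrow{ab}$. Its sides $\overline{va}$ and $\overline{vb}$ lie on rays $r_i$ and $r_k$, and $s$ crosses exactly the wedges $W_i,\dots,W_{k-1}$. Hence $w(T_v(s))=\sum_{j=i}^{k-1} w(\tau_j(s\cap W_j))$. Computing this directly costs the number of wedges spanned. The saving comes from sharing along the supporting segment: for each segment $l\in D_v$, order the wedges it crosses, $W_{j_1},W_{j_2},\dots$, and form the prefix sums $F_l(t)=\sum_{u\le t} w(\tau_{j_u}(l\cap W_{j_u}))$. Each SST on $l$ is a contiguous range of these wedges, so its weight is $F_l(t_2)-F_l(t_1)$ in $O(1)$ time. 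A segment $l$ crosses $O(n^4)$ wedges, and the pairs ($l$, wedge crossed by $l$) correspond bijectively to pieces of $l$ between consecutive intersections with $B_v$. These pieces are edges of the arrangement $D_v\cup B_v$, and there are $O(n^6)$ of them. So all the $F_l$ are built in $O(n^6)$ time, and the $O(n^4)$ SST queries add only $O(n^4)$.

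The main obstacle I expect is verifying the structural claim that inside each wedge the pieces of $D_v$ are non-crossing and the cells form a chain ordered by distance from $v$. One subtlety is segment endpoints on $\partial P_v$ lying strictly inside a wedge. I would argue that every endpoint of a segment in $D_v$ lies in $V_D$ or on a window of $P_v$. Windows are collinear with $v$ and a reflex vertex of $P$, so they lie on rays of $B_v$. Endpoints on edges of $P$ are intersections of diagonal extensions, hence in $V_D$ and on rays. So no segment ends inside an open wedge, and the chain structure follows. Once this holds, the rest is bookkeeping with the DCEL of the arrangement: walk each wedge outward and each segment $l$ along its pieces. This gives the $O(n^6)$ bound.
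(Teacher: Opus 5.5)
Your proposal is correct and follows essentially the same route as the paper. The paper likewise computes the weights of the auxiliary triangles $\triangle v q_i q_{i+1}$ in each cone between consecutive visibility extensions by accumulating cell weights outward from $v$, and then accumulates these along each segment of $D_v$. The only differences are that the paper resets its accumulation at points of $V_D$ where you take differences of global prefix sums (an immaterial variation), and that your argument that pieces of $D_v$ neither cross nor end inside an open wedge is more careful than the paper's one-line justification.
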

\begin{proof}
    We define a set of auxiliary subsegment triangles, which is a refinement of the SSTs, as follows. Let $W$ be the set of vertices of the arrangement $C_v$ that are not in $V_D$ (see \Cref{fig:SSTs}). Each such point corresponds to an intersection point between a segment $l_1 \in D_v$ with a segment $l_2 \in B_v$ (by definition, the segments in $B_v$ intersect only at $v$). The points in $W$ subdivide the directed subsegments of each directed segment in $D_v$ into auxiliary subsegment. Consider a directed segment $l=\overline{ab}\in D_v$, and denote the sequence of points from $V_D\cup W$ on $l$ by $a=q_1,q_2,\dots,q_k=b$. We add an auxiliary subsegment triangle (AST) $\triangle vq_iq_{i+1}$ for each auxiliary subsegment. Each such AST is a connected union of cells from $C_v$.
    
    Consider the set of ASTs in the cone between two consecutive visibility extensions $l_1,l_2\in B_v$. The auxiliary subsegments defining these ASTs are non-crossing (otherwise there would be a point from $V_D$ between them, for which we add another visibility extension), and therefore the weights of the ASTs in this cone can be computed in time linear in their number, by accumulating the weights of cells in $C_v$ in the order of containment between the ASTs.

    After we have computed the weights of all ASTs, we can compute the weights of the SSTs as follows. Consider a directed segment $l=\overline{ab}\in D_v$ with a sequence of points $a=q_1,q_2,\dots,q_k=b$ from $V_D\cup W$ as before. Notice that $q_1\in V_D$ because $l$ is maximal. Starting from $q_3$ (if $k=2$, we are done), we iterate over the sequence of points in their order along $l$, and update the weights of their corresponding ASTs as follows: if $q_{i-1}\in W$, set $w(\triangle vq_{i-1}q_i)=w(\triangle vq_{i-1}q_i)+w(\triangle vq_{i-2}q_{i-1})$. In other words, we accumulate the weights of ASTs along $l$ until we get to a point from $V_D$. The result of this process is that for a point $q_i\in V_D$ on $l$, the updated weight $w(\triangle vq_{i-1}q_i)$ will be the weight of the SST $\triangle vq_jq_i$, where $q_j$ is the last point on $l$ before $q_i$ that belongs to $V_D$. 

    Overall, the running time of both steps is linear in the number of auxiliary subsegments, which is $O(n^6)$ (as the size of $C_v$).
\end{proof}

We conclude that in each iteration of the algorithm, the total time for updating the weights of the SSTs for a single vertex $v$ of $P$ is $O(n^{6})$, and thus the total running time for all vertices is $O(n^{7})$.

\subsection{The heaviest path in the DAG}
Our goal now is to prove an equivalence between the heaviest path in $G$ and a largest restricted polygon in $P_v$ that contains $v$. Computing the heaviest path in a DAG $G$ can be done in $O(|V(G)|+|E(G)|)$ time by sorting the vertices in topological order and then running a dynamic programming algorithm (see, e.g., \cite{noltemeier1975algorithm}). Since $|V(G)|+|E(G)|=O(n^4)$, and constructing $G$ takes $O(n^6)$ time, we have an algorithm with running time of $O(n^6)$ for computing the largest restricted polygon in $P_v$.

Below, we show an equivalence between maximal paths in $G$ and restricted polygons in $P_v$ that contain $v$ as a vertex.

\begin{claim}\label{clm:max-path-endpoints}
    Any maximal path in $G$ starts with a directed subsegment $\overrightarrow{va}$ and ends with a directed subsegment $\overrightarrow{bc}$ such that $\overline{va},\overline{cv}$ lie on diagonal extensions, and the angle between $\overrightarrow{bc}$ and $\overrightarrow{cv}$ is at most $180^\circ$.
\end{claim}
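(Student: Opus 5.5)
The plan is to use only maximality of the path: the first subsegment has no incoming edge and the last has no outgoing edge. Before starting I will fix two structural facts about $P_v$, since $v$ is a convex vertex.

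First, every edge of $\partial P_v$ lies on a segment of $D_v$. An edge of $\partial P_v$ is either a piece of an edge of $P$, or a window, which is a piece of a ray from $v$ through a reflex vertex of $P$. Both kinds lie on diagonal extensions of pairs of vertices of $P$, and I will take edges of $P$ to count as such. Second, the interior of $P_v$ lies to the left of each boundary edge traversed counterclockwise. The exception is the edge entering $v$, namely the last edge $\overline{wv}$ of $\partial P_v$. Its supporting segment has $v$ as an endpoint, so by the direction convention it is directed away from $v$; along this segment the interior lies to the \emph{right}.

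\textbf{Start of the path.} Let $s_1=\overrightarrow{ab}$ be the first vertex of a maximal path, and suppose $a\neq v$. There are three cases.
\begin{enumerate}[noitemsep]
\item If $a$ lies in the interior of $P_v$, then $a$ is an interior point of the segment $l\in D_v$ containing $s_1$. The preceding subsegment of $l$ ends at $a$ and is collinear with $s_1$, so the turn is exactly $180^\circ$ and there is an edge into $s_1$. This contradicts maximality.
\item If $a\in\partial P_v$ lies on a boundary edge other than $\overline{wv}$, take the boundary subsegment ending at $a$. It exists, since $a\neq v$ and $a\in V_D$. It is directed counterclockwise, and $b$ lies in $P_v$, i.e.\ to its left, so again there is an incoming edge.
\item If $a$ lies on $\overline{wv}$ with $a\neq v$, then every point of $\partial P_v$ after $a$ in counterclockwise order lies on $\overline{av}$. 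So the only segments of $D_v$ directed out of $a$ run along $\overline{wv}$ itself, and these are directed towards $a$, not out of it. Hence no such $s_1$ exists.
\end{enumerate}
So $a=v$. The segment of $D_v$ carrying $s_1$ passes through the vertex $v$ and another point of $V_D$. I then argue it is a diagonal extension through $v$; this is exactly what the statement requires, namely that $\overline{va}$ lies on a diagonal extension.

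\textbf{End of the path.} Let $s_k=\overrightarrow{bc}$ be the last vertex of the path. The same collinear-continuation argument shows that $c$ cannot be an interior point of its segment, so $c\in\partial P_v$. If $c$ lies on a counterclockwise boundary edge other than $\overline{wv}$ and $c\neq v$, the boundary subsegment leaving $c$ is directed counterclockwise. Since $b$ lies on the interior side, the turn $b\to c\to(\text{next})$ is a left turn, giving an outgoing edge, which is a contradiction. Hence $c$ lies on the last edge $\overline{wv}$, possibly $c=v$ in the degenerate case. Then $\overline{cv}$ is contained in $\overline{wv}$, which lies on a diagonal extension through $v$ (an edge of $P$ or a window from $v$). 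Finally, $b$ lies in $P_v$, and $P_v$ lies to the right of $\overrightarrow{vc}$, i.e.\ to the left of $\overrightarrow{cv}$. Thus the turn from $\overrightarrow{bc}$ to $\overrightarrow{cv}$ is a left turn, with angle at most $180^\circ$.

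\textbf{Main obstacle.} I expect the delicate part to be the special role of the boundary edge entering $v$ under the direction convention. It must be handled in both directions: it rules out starting anywhere but $v$, and it forces the end onto $\overline{wv}$. A second point to check is that edges of $\partial P_v$, including windows, really are carried by diagonal extensions in $D$. For windows this follows because they lie on the ray from $v$ through a reflex vertex.
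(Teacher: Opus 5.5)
Your second structural fact is false, and the end-of-path step depends on it. $\overline{wv}$ is not the only exception. Every window is carried by a segment of $D_v$ with $v$ as an endpoint (the diagonal extension of $\overline{vr}$, as you note yourself), so it is also directed away from $v$. Whenever $\partial P_v$ runs along a window $\overline{rz}$ toward $v$ (from its far end $z$ back to $r$), $P_v$ lies to the \emph{right} of this directed segment. This happens whenever the pocket hidden behind $r$ lies counterclockwise of the ray from $v$ through $r$.

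Now suppose $c$ lies on such a window, either in its relative interior or $c=z$. Every other segment of $D_v$ containing $c$ comes from the visible side and is directed into $c$. So the only subsegment that can leave $c$ runs along the window away from $v$, and $b$ lies on its right. Hence $s_k$ has no outgoing edge, contrary to your claim.

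Concretely, let $P$ have vertices $v=(0,0)$, $(3,-1)$, $(3,3)$, $(1.2,3)$, $r=(1,0)$, $(1,3)$, $(0.5,3)$ in counterclockwise order. The window at $r$ is $\overline{rz}$ with $z=(3,0)$, traversed from $z$ to $r$. Take the path that follows the edge from $v$ to $(3,-1)$, then the diagonal from $(3,-1)$ toward $(1.2,3)$, clipped to $P_v$. This path is maximal and ends at $c=(2.55,0)$, which is not on $\overline{wv}$ (here $w=(0.5,3)$). So your conclusion ``hence $c$ lies on the last edge $\overline{wv}$'' is false.

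The claim itself survives. Such a window lies on the diagonal extension of $\overline{vr}$ through $v$. Since $b$ lies clockwise of $\overrightarrow{vc}$, the turn from $\overrightarrow{bc}$ to $\overrightarrow{cv}$ is a left turn. Your case analysis therefore has to end with ``$c$ lies on $\overline{wv}$ or on a window traversed toward $v$'', and treat the second case explicitly.

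A clean way to organize all cases: $P_v$ is star-shaped from $v$, so counterclockwise order on $\partial P_v$ is angular order around $v$. Consequently, a segment of $D_v$ not on a line through $v$ always advances counterclockwise around $v$, and a segment on such a line points away from $v$.

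A smaller point concerns the start argument. Case 1 should read ``$a$ is not the first endpoint of its segment of $D_v$'' rather than ``$a$ is interior to $P_v$''. At a reflex vertex of $P$ with both incident edges visible, $b$ can lie to the right of the arriving edge, so ``$b\in P_v$, i.e.\ to its left'' fails there. In that situation $a$ is interior to its segment, and collinearity supplies the incoming edge.
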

\begin{proof}
    Consider a maximal directed path $\pi=s_1,s_2,\dots,s_k$ in $G$. Since $\pi$ is maximal, there is no directed subsegment $s_0$ such that there is an edge in $G$ between $s_0$ and $s_1$. Denote $s_1=\overrightarrow{xa}$, and assume by contradiction that $x\neq v$. If $x$ is not on $\partial P_v$, then clearly there is a directed subsegment $s_0=\overrightarrow{yx}$ on the same diagonal extension, and $\angle yxa=180^\circ$ so there is an edge between $s_0$ and $s_1$, a contradiction. Otherwise, $x$ is on $\partial P_v$. If $x$ is not a vertex of $P$, then by similar arguments we get a contradiction because any edge of $P_v$ lies on a diagonal extension. If $x$ is a convex vertex of $P$, then there must be a point $y$ from $V_D$ on one of its incident edges before $x$, and the angle between $\overrightarrow{yx}$ and $\overrightarrow{xa}$ is at most $180^\circ$, so again we get a contradiction. If $x$ is a reflex vertex, then consider the diagonal extension of $\overline{vx}$ and let $y$ be a point from $V_D$ on $\overline{vx}$ before $x$. If the angle between $\overrightarrow{yx}$ and $\overrightarrow{xa}$ is larger than $180^\circ$, then $a$ is not visible from $v$. Again we get that there is an edge between $s_0=\overrightarrow{yx}$ and $\overrightarrow{xa}$, a contradiction.     
    We conclude that $s_1=\overrightarrow{va}$ and $\overrightarrow{va}$ lies on a diagonal extension.
    
    Denote $s_k=\overrightarrow{bc}$. Notice that by symmetric arguments (reversing the directions), $\overrightarrow{cv}$ must lie on a diagonal extension, and the angle between $\overrightarrow{bc}$ and $\overrightarrow{cv}$ is at most $180^\circ$.
\end{proof}

For a maximal directed path $\pi=s_1,s_2,\dots,s_k$ in $G$, let $Q(\pi) = \bigcup_{s \in \pi} T_v(s)$.

\begin{claim} \label{clm:any_maximal_path}
    For any maximal directed path $\pi$, $Q(\pi)$ is a restricted polygon that has $v$ as one of its vertices.
\end{claim}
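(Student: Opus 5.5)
We argue that the boundary of $Q(\pi)$ is the closed polygonal chain $v=p_0,p_1,\dots,p_k=c,v$ traced by the path together with the closing segment $\overline{cv}$, and that this chain bounds a convex region which is exactly the union of the SSTs. Write $s_i=\overrightarrow{p_{i-1}p_i}$. By \Cref{clm:max-path-endpoints}, $p_0=v$, $s_1$ lies on a diagonal extension through $v$, $\overline{p_kv}$ lies on a diagonal extension, and the turn at $p_k$ from $s_k$ to $\overrightarrow{p_kv}$ is a left turn. By the definition of $E(G)$, every interior turn $\angle p_{i-1}p_ip_{i+1}$ is also a left turn (at most $180^\circ$). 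All $p_i$ are in $V_D$, and every edge lies on a segment of $D_v$, hence on a diagonal extension. So once we show that the chain is a simple closed curve making only left turns, bounding a region equal to $Q(\pi)$ and contained in $P$, the region is convex with vertices in $V_D$ and edges on diagonal extensions. It is therefore restricted, and it has $v$ as a vertex.

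\textbf{Step 1: the SSTs sweep monotonically around $v$.} We show that, seen from $v$, the points $p_1,\dots,p_k$ appear in strictly increasing angular (CCW) order inside the cone of $P_v$ at $v$, and that the total angle swept is less than $180^\circ$. The strict increase should follow from the direction convention on $D_v$: each $s_i$ points from its earlier endpoint to its later endpoint in CCW order along $\partial P_v$ starting from $v$. A segment of $P_v$ whose endpoints $a,b$ are in that order is crossed by rays from $v$ in angular order from $a$ toward $b$. Here we use that $P_v$ is star-shaped from $v$, so every point of $\partial P_v$ is met by the ray from $v$ through it. The bound on the total angle comes from $v$ being convex. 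Consequently consecutive SSTs $\triangle v p_{i-1}p_i$ and $\triangle v p_ip_{i+1}$ share only the edge $\overline{vp_i}$ and lie in consecutive angular sectors. Their union is then the polygon bounded by the fan chain, and this chain is simple and star-shaped from $v$.

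\textbf{Step 2: convexity and containment.} A polygon that is star-shaped from the vertex $v$, whose sectors are in angular order, is convex iff every vertex makes a left turn. The interior vertices $p_1,\dots,p_{k-1}$ satisfy this by the edges of $G$, and $p_k$ satisfies it by \Cref{clm:max-path-endpoints}. The angle at $v$ is less than $180^\circ$ by Step 1. For the degenerate case of angle exactly $180^\circ$ at some $p_i$, the point $p_i$ is simply not a true vertex, which is harmless. Containment in $P$ holds because each SST $\triangle vab$ has $\overline{ab}\subseteq P_v$, and $P_v$ is star-shaped from $v$, so the whole triangle lies in $P_v\subseteq P$.

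\textbf{Main obstacle.} The hardest part is Step 1: ruling out a path that winds back, that is, $p_{i+1}$ being angularly before $p_i$ with respect to $v$, or the chain wrapping more than once. I would handle it by combining the CCW-direction convention with \Cref{obs:order}, as in the DAG claim. Consecutive segments cross with boundary order $a_i,a_{i+1},b_i,b_{i+1}$, which forces the endpoint sequences to advance monotonically along $\partial P_v$. Because of star-shapedness, boundary order from $v$ coincides with angular order around $v$. This gives monotonicity of the $p_i$ as well, since each $p_i$ lies on both $\overline{a_ib_i}$ and $\overline{a_{i+1}b_{i+1}}$.
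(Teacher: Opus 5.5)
Your proposal is correct and follows the paper's route: left turns at interior points from the definition of $E(G)$, the endpoint conditions from \Cref{clm:max-path-endpoints}, convexity of $v$, and the SSTs forming a fan triangulation of the enclosed polygon. It is more careful than the paper's two-line proof, because your Step~1 observation (every segment of $D_v$ not through $v$ is directed so that its angle about $v$ strictly increases, all inside a cone of opening less than $180^\circ$) is exactly what excludes a left-turning chain that winds around $v$ more than once, which the paper leaves implicit; the only quibbles are that the monotonicity is non-strict along the initial subsegments on the chord through $v$ (whose SSTs are degenerate), and that the \Cref{obs:order} argument in your last paragraph is unnecessary.
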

\begin{proof}
    By the definition of $G$ and \Cref{clm:max-path-endpoints}, since $v$ is a convex vertex, the directed subsegments of $\pi$ together with $v$ enclose a convex polygon whose vertices are from $V_D$ and whose edges are subsegments of diagonal extensions --- a restricted polygon.

    To see that $Q(\pi)$ is exactly the polygon enclosed by the directed subsegments of $\pi$, recall that each triangle $T_v(s_i)$ connects $v$ to two consecutive vertices of $V_D$ on $\pi$. In other words, the set $T_v(s_1),\dots,T_v(s_k)$ is a triangulation of $Q(\pi)$.
\end{proof}

\begin{claim} \label{clm:any_maximal_in_dag}
    For any maximal restricted polygon $Q$ that has $v$ as one of its vertices, there exists a maximal directed path $\pi$ such that $Q(\pi)=Q$.
\end{claim}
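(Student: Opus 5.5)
Let $Q$ be a maximal restricted polygon with $v$ as a vertex. We build a path from its boundary and then check maximality. Let $v=p_0,p_1,\dots,p_m,p_{m+1}=v$ be the vertices of $Q$ in counterclockwise order. Each edge $\overline{p_ip_{i+1}}$ is a subsegment of a diagonal extension $l_i\in D$. Since $Q\subseteq P_v$ (the observation above), $\overline{p_ip_{i+1}}$ lies in $l_i\cap P_v\in D_v$.

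First, every point of $V_D$ on the edge $\overline{p_ip_{i+1}}$ is one of the points $q_j$ of the corresponding segment of $D_v$. Hence the edge is a union of consecutive directed subsegments, provided the orientation agrees with the direction of travel from $p_i$ to $p_{i+1}$. To check orientation, use convexity of $Q$ and the fact that $v\in Q$. The directed line $\overrightarrow{p_ip_{i+1}}$ has $Q$, and hence $v$, on its left. The maximal segment of $D_v$ through $p_i,p_{i+1}$ has endpoints $a,b$ on $\partial P_v$ with $a$ before $p_i$. Walking counterclockwise along $\partial P_v$ from $v$, we meet $a$ before $b$, because the part of $P_v$ to the right of the chord is the pocket cut off by $\overline{ab}$ that does not contain $v$. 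So the edge orientation matches the direction in $V(G)$.

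Concatenating the subsegments of all edges in counterclockwise order gives a sequence $\pi$ of vertices of $G$. Consecutive subsegments on the same edge meet at angle $180^\circ$. At a vertex $p_i$ of $Q$ we turn left with angle at most $180^\circ$, by convexity. So every consecutive pair is an edge of $G$, and $\pi$ is a directed path. Since the SSTs $T_v(s)$ for $s\in\pi$ fan-triangulate $Q$ from $v$, we get $Q(\pi)=Q$, as in \Cref{clm:any_maximal_path}.

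It remains to show that $\pi$ is maximal, and I expect this to be the main obstacle. The path starts with a subsegment $\overrightarrow{vp}$ leaving $v$, and $v$ has no incoming edges, so $\pi$ cannot be extended backward. Extending it forward would require an edge out of the last subsegment, whose head is $v$. I would argue that no subsegment starts at $v$ with a left turn that $G$ allows from an arc ending at $v$, since such a turn would exit $P_v$ or reverse direction. If that argument fails, the fallback is to read ``maximal path'' as a path from a source $\overrightarrow{va}$ to a segment ending at $v$, as \Cref{clm:max-path-endpoints} characterizes. If needed, I would also use the maximality of $Q$ to rule out $\pi$ being a proper subpath of a longer path giving a larger restricted polygon.
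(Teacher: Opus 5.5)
There is a genuine gap in your orientation step, and it carries over into the maximality argument. Your pocket argument needs $v$ to lie strictly to the left of the chord. That fails for the closing edge $\overline{p_m v}$. Since $v$ is convex, $v$ is an endpoint of the segment of $D_v$ containing $\overline{p_m v}$, and by the direction convention that segment is directed \emph{away} from $v$. So its subsegments run from $v$ toward $p_m$, opposite to your counterclockwise traversal.

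In fact no vertex of $G$ has head $v$ at all. This is exactly why $G$ is acyclic and why every $\overrightarrow{va}$ is a source. The sequence you build is therefore not a path in $G$, and the ``last subsegment whose head is $v$'' that your maximality argument examines does not exist. The paper avoids this by taking the path to be only the subsegments from $v$ up to the last vertex $p_m$ before $v$. It then closes $Q(\pi)$ with the edge $\overline{p_m v}$ implicitly, matching the endpoint structure in \Cref{clm:max-path-endpoints}. Nothing is lost, since SSTs of subsegments on lines through $v$ are degenerate.

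Once the path ends at $p_m$, maximality must be checked at $p_m$. There, outgoing edges can exist in general: for example, the straight continuation along the diagonal extension of $\overline{p_{m-1}p_m}$ when $p_m$ is interior to $P_v$. So your primary argument cannot work. Your fallback, using the maximality of $Q$, is the right idea and should be carried out.

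Backward extension is impossible because $\pi$ starts at $v$. For forward extension, suppose there is an edge from the last subsegment to some $\overrightarrow{p_m d}$. Extend to a maximal path $\pi'$. By \Cref{clm:any_maximal_path}, $Q(\pi')$ is a restricted polygon with vertex $v$, and $Q(\pi') \supseteq Q(\pi) = Q$. The containment is strict, for two reasons:
\begin{enumerate}[noitemsep]
\item $d$ cannot lie on the ray from $v$ through $p_m$. That would be a right turn, since $v$ lies to the left of $\overrightarrow{p_{m-1}p_m}$.
\item Directed subsegments not on lines through $v$ move strictly counterclockwise around $v$. This follows from the direction convention, because $\partial P_v$ is traversed in angular order around $v$.
\end{enumerate}
Hence the nondegenerate triangle $T_v(\overrightarrow{p_m d})$ lies angularly beyond $p_m$, outside $Q$, contradicting the maximality of $Q$. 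The paper's own proof does not address maximality of $\pi$ explicitly, so this step is worth writing out in full.
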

\begin{proof}
    Since $Q$ is a restricted polygon, each of its edges is a concatenation of subsegments in $D_v$ (possibly a single subsegment). Denote $s_1, \dots, s_k$ the chain of subsegments going counterclockwise from $v$ along the border of $Q$. Each $s_i$ has a vertex $u_i$ in $G$. Since $Q$ is a convex polygon, each angle between two consecutive subsegments $s_i, s_{i+1}$ is convex. By the construction of $G$, there is an edge $e_i = (u_i, u_{i+1})$ in $G$. Thus, the directed path $\pi = e_1, \dots, e_k$ exists in $G$. Notice that given a path $\sigma \in G$, $Q(\sigma)$ goes along the subsegments represented by the vertices of the path and then closes in a convex polygon using an extra edge connecting to $v$ itself (connecting the end point of the segment represented by the last vertex in $\sigma$ to the start point of the segment represented by the first vertex in $\sigma$). This is done since we wish for our $G$ to be acyclic, thus, we add the last edge in $Q(\sigma)$ artificially without representing the respective connection using an edge in $G$. In the case of $\pi$, this is exactly the polygon $Q$. That is, $Q(\pi) = Q$.
\end{proof}

We obtain the following lemma.
\begin{lemma}\label{obs:heavy-paths-convex}
    Let $Q$ be a restricted polygon, and $v$ a vertex in $P$. 
    Then $Q$ is a largest restricted polygon that contains $v$ if and only if there exists a heaviest path $\pi$ in $G$ such that $Q(\pi)=Q$. Moreover, the weight of $\pi$ is exactly $\sum_{f \in \U, f \subseteq Q} w(f)$.
\end{lemma}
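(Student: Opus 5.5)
The plan is to combine \Cref{clm:any_maximal_path} and \Cref{clm:any_maximal_in_dag} into a correspondence between maximal paths of $G$ and maximal restricted polygons having $v$ as a vertex. On top of this, we prove a weight identity: for every maximal path $\pi$,
\[
w'(\pi)=\sum_{s\in\pi} w(T_v(s)) = \sum_{f\in\U,\, f\subseteq Q(\pi)} w(f).
\]
The ``if and only if'' then follows by comparing maxima on both sides.

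\textbf{Step 1 (weight identity).} By the proof of \Cref{clm:any_maximal_path}, the triangles $T_v(s_1),\dots,T_v(s_k)$ triangulate $Q(\pi)$, so they have pairwise disjoint interiors. Each SST is a union of cells of $C_v$, so every cell of $C_v$ inside $Q(\pi)$ is counted exactly once, and
\[
w'(\pi)=\sum_{c\in C_v,\, c\subseteq Q(\pi)} w(c).
\]
Next, $Q(\pi)$ is restricted, so it is a union of faces of $\U$. Each face $f\subseteq Q(\pi)$ therefore contributes all of its cells, whose weights $w(f)/|C'|$ sum to $w(f)$. No cell of $Q(\pi)$ comes from a face only partially contained in $Q(\pi)$, because $C_v$ refines $\A_D$ and the boundary of $Q(\pi)$ lies on segments of $D$. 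This gives the identity.

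\textbf{Step 2 (reduction to maximal objects).} Since $v$ is a convex vertex and $Q$ is convex and contains $v$, I plan to argue that $Q$ can be enlarged to a restricted polygon $Q'\supseteq Q$ having $v$ as a vertex, with $w(Q')\ge w(Q)$ because weights are nonnegative. For example, if $v$ is not a vertex of $Q$, take the convex hull of $Q\cup\{v\}$ inside $\VP(v)$ and extend along diagonal extensions through $v$. A largest restricted polygon containing $v$ can thus be taken to be maximal with $v$ as a vertex. Likewise, every path extends to a maximal path without decreasing weight, so heaviest paths can be taken maximal.

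\textbf{Step 3 (both directions).} For the forward direction, let $Q$ be largest. By \Cref{clm:any_maximal_in_dag} there is a path $\pi$ with $Q(\pi)=Q$ and $w'(\pi)=w(Q)$. Every maximal path $\sigma$ satisfies $w'(\sigma)=w(Q(\sigma))\le w(Q)$, since $Q(\sigma)$ is restricted and contains $v$ (\Cref{clm:any_maximal_path}). Hence $\pi$ is heaviest. For the converse, let $\pi$ be heaviest with $Q(\pi)=Q$. Any restricted $Q'$ containing $v$ is dominated by a maximal one, which equals $Q(\sigma)$ for some path $\sigma$. Therefore
\[
w(Q')\le w'(\sigma)\le w'(\pi)=w(Q).
\]

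\textbf{Main obstacle.} I expect the main difficulty to be Step 2: handling polygons that contain $v$ without having it as a vertex, and the ``maximality'' subtleties. If enlarging $Q$ is awkward, I would instead use the remark preceding the Observation, that it suffices to restrict $\R$ to polygons produced by \Cref{thm:approx}. I would then interpret ``contains $v$'' as ``has $v$ as a vertex'' where needed.
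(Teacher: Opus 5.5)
Your proposal is correct and follows the paper's proof. The equivalence comes from combining \Cref{clm:any_maximal_path} and \Cref{clm:any_maximal_in_dag}, and the weight identity comes from the SSTs of a path triangulating the restricted polygon $Q(\pi)$, which contains either all or none of the cells of each face of $\U$; your Step 3 spells out the comparison of maxima that the paper leaves implicit.

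The worry in Step 2 is unnecessary, and the hull construction there is vacuous, since $v\in Q$ already gives $\mathrm{conv}(Q\cup\{v\})=Q$. A convex vertex $v$ of $P$ (interior angle $<180^\circ$) that lies in a convex $Q\subseteq P$ is automatically a vertex of $Q$. Moreover, the boundary-walk construction in the proof of \Cref{clm:any_maximal_in_dag} never uses maximality of $Q$, so it applies directly to the given largest $Q$ rather than to an enlarged replacement.
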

\begin{proof}
    The first part of the lemma follows from \Cref{clm:any_maximal_path} and \Cref{clm:any_maximal_in_dag}.
    The weight of $\pi$ is $\sum_{s \in \pi} w(T_v(s))$, and since $Q(\pi)$ is a restricted polygon, for every face $f\in \U$ it either contains all the cells from $C_v$ in this face or does not contain any cell in this face. Since the sum of weights of the cells in a face $f$ is equal to $w(f)$, we get that the weight of $\pi$ is exactly $\sum_{f \in \U, f \subseteq Q} w(f)$.
\end{proof}

We conclude that we can iterate over all vertices of $P$, and for each vertex $v$ find a largest restricted polygon that contains $v$. Then, we can choose the restricted polygon $Q$ that maximizes $\sum_{f \in \U, f \subseteq Q} w(f)$, as required. Since in this section we have only considered convex vertices, we now show how to handle reflex vertices.

\subsection{Handling visibility polygons of reflex vertices}
If $v$ is a reflex vertex, we split $P_v$ into two sub‐polygons $P_{v_1}$ and $P_{v_2}$, separated by one of the diagonal extensions from $v$. Such a diagonal extension always exists by the extension of each of the edges of $P$ which are adjacent in $v$.
We then run the algorithm on each of them separately, and get two restricted polygons $Q_1,Q_2$.
The observation below follows from \Cref{obs:heavy-paths-convex}. 
\begin{observation}\label{obs:heavy-paths-reflex}
    Let $Q^*$ be a largest restricted polygon in $P_v$ that contains $v$, then $w(Q^*)=w((Q^*\cap P_{v_1})\cup(Q^*\cap P_{v_2}))$ and thus $w(Q_1\cup Q_2)\ge w(Q^*)$.
\end{observation}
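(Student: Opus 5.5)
The approach is to split $Q^*$ along the chosen diagonal extension $\ell$ through $v$ that separates $P_{v_1}$ from $P_{v_2}$, and then compare each half with the optimum returned on the corresponding side.

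\textbf{Step 1: the two halves are restricted.} Since $Q^*$ is convex and $\ell$ is a line through $v$, each of $Q^*\cap P_{v_1}$ and $Q^*\cap P_{v_2}$ is convex. Each half is bounded by subsegments of edges of $Q^*$ together with the part of $\ell$ inside $Q^*$. The edges of $Q^*$ lie on diagonal extensions, and $\ell$ is itself a diagonal extension, so every edge of a half lies on a diagonal extension. Every vertex of a half is either a vertex of $Q^*$ or an intersection of $\ell$ with an edge of $Q^*$. In both cases it is an intersection of two diagonal extensions, hence lies in $V_D$. Therefore both halves are restricted polygons, and each contains $v$.

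\textbf{Step 2: the weight identity.} Since $\ell\in D$, the line $\ell$ is a union of edges of the arrangement $\A_D$. So no face $f\in\U$ crosses $\ell$; every face lies on one side. It follows that a face $f\subseteq Q^*$ lies entirely in $Q^*\cap P_{v_1}$ or entirely in $Q^*\cap P_{v_2}$. Conversely, any face contained in either half is contained in $Q^*$. This gives $w(Q^*)=w((Q^*\cap P_{v_1})\cup(Q^*\cap P_{v_2}))$; the two sets in fact coincide as point sets.

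\textbf{Step 3: comparison with $Q_1,Q_2$.} Inside $P_{v_i}$, the vertex $v$ has angle at most $180^\circ$, so $v$ is a convex vertex of the subpolygon. The construction of \Cref{sec:dag_const} therefore applies, and by \Cref{obs:heavy-paths-convex} the returned $Q_i$ is a largest restricted polygon in $P_{v_i}$ containing $v$. Hence $w(Q_i)\ge w(Q^*\cap P_{v_i})$.

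Monotonicity alone does not give $w(Q_1\cup Q_2)\ge w(Q^*)$ from these per-side bounds, because $w$ of a union is not the sum of the $w$'s. The fix is that $Q_1\subseteq P_{v_1}$ and $Q_2\subseteq P_{v_2}$ have disjoint interiors, so no face lies in both. Moreover, a face inside $Q_1\cup Q_2$ lies on one side of $\ell$, hence inside $Q_i$ for that side. This gives $w(Q_1\cup Q_2)=w(Q_1)+w(Q_2)$. Similarly, $w(Q^*)=w(Q^*\cap P_{v_1})+w(Q^*\cap P_{v_2})$. Combining these with the per-side bounds gives $w(Q_1\cup Q_2)\ge w(Q^*)$.

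The step I expect to be the main obstacle is applying \Cref{obs:heavy-paths-convex} within $P_{v_i}$. The DAG there is built on the segments $D_v$ restricted to $P_{v_i}$, so I must check that the restricted segments are still maximal in the simple polygon $P_{v_i}$. I must also check that $Q^*\cap P_{v_i}$, whose boundary uses a piece of $\ell$ through $v$, corresponds to a maximal path. This should hold because $\ell\cap P_{v_i}$ is a boundary edge of $P_{v_i}$, and it lies on a diagonal extension as \Cref{clm:max-path-endpoints} requires.
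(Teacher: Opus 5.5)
Your proof is correct and takes the same route as the paper. The paper justifies this observation only by a one-line appeal to \Cref{obs:heavy-paths-convex} on each side of the splitting diagonal extension, and your Steps 1--3 supply the details that appeal leaves implicit: the halves $Q^*\cap P_{v_i}$ are restricted and contain $v$, no face of $\U$ crosses $\ell$, and $w$ is additive on $Q_1\cup Q_2$ because $Q_1,Q_2$ have disjoint interiors. The obstacle you flag, namely that the DAG guarantees carry over to each $P_{v_i}$ (where $v$ has angle at most $180^\circ$), is also taken for granted in the paper, which simply runs the convex-vertex construction on each piece.
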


To summarize, we build the DAGs for all the vertices of $P$. If a vertex $v$ is a reflex vertex, we first split it into two polygons as described above, and it will have two DAGs associated with it. Constructing all the DAGs takes $O(n^7)$ time.
To find at most two restricted polygons $Q_1,Q_2\in\R$ such that $w(Q_1\cup Q_2)\ge w(Q^*)$, where $Q^*$ is the restricted polygon that maximizes $w(Q^*)$, we find the heaviest path in each DAG (two DAGS, and two restricted polygons for a reflex vertex). Then we take the polygon $Q_1=Q(\pi)$ (or two polygons $Q_1,Q_2$) that correspond to the heaviest path $\pi$ (or paths), over all DAGs. By \Cref{obs:heavy-paths-convex} and \Cref{obs:heavy-paths-reflex}, we obtain the following theorem.

\begin{theorem}[Meshed potato peeling]
    Given the set of faces $\U$, with a binary function $w:\U\rightarrow\{0,1\}$, one can compute in $O(n^7)$ time at most two restricted polygons $Q_1,Q_2\in\R$ such that $w(Q_1\cup Q_2)\ge w(Q^*)$, where $Q^*$ is the restricted polygon that maximizes $w(Q^*)$.
\end{theorem}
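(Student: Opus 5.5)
The plan is to assemble the theorem from the pieces already built in this section, with two tasks: bound the running time, and prove the inequality $w(Q_1\cup Q_2)\ge w(Q^*)$.

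\textbf{Step 1: reduce to a polygon containing a vertex.} By the discussion following \Cref{thm:approx}, we may restrict $\R$ to restricted polygons that contain a vertex of $P$; the set cover guarantee is unaffected. So let $Q^*$ be a maximizer among restricted polygons containing some vertex $v$ of $P$. By the Observation on visibility polygons, $Q^*\subseteq P_v=\VP(v)$.

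\textbf{Step 2: make $v$ a vertex of $Q^*$.} Since $v$ is a vertex of $P$ and $Q^*\subseteq P$ is convex, $v$ cannot lie in the interior of $Q^*$. If $v$ lies on the relative interior of an edge of $Q^*$, then $v$ is a convex vertex of $P$ with its angle flattened, which cannot happen for a vertex of $P$. So $v$ is a vertex of $Q^*$. If $v$ is only contained in $Q^*$ as a boundary point of this kind, I would instead enlarge $Q^*$ to a maximal restricted polygon. Since $w\ge 0$, enlarging does not decrease $w$, and the maximality hypothesis of \Cref{clm:any_maximal_in_dag} then applies.

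\textbf{Step 3: the case of a convex vertex $v$.} By \Cref{clm:any_maximal_in_dag} there is a maximal path $\pi$ with $Q(\pi)=Q^*$. By \Cref{obs:heavy-paths-convex}, the heaviest path $\pi'$ in $G$ satisfies $w(Q(\pi'))=w'(\pi')\ge w'(\pi)=w(Q^*)$. Moreover, $Q(\pi')$ is restricted by \Cref{clm:any_maximal_path}.

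\textbf{Step 4: the case of a reflex vertex $v$.} We split $P_v$ by a diagonal extension through $v$ into $P_{v_1}$ and $P_{v_2}$. In each part $v$ acts as a convex vertex, so Step 3 applies to $Q^*\cap P_{v_i}$, which is a restricted polygon since the splitting segment lies in $D$. Then \Cref{obs:heavy-paths-reflex} gives $w(Q_1\cup Q_2)\ge w(Q^*)$, where each $Q_i$ is the output on $P_{v_i}$. Taking the best output over all vertices (one polygon for a convex vertex, a pair for a reflex vertex) dominates $Q^*$. The comparison has to be made on $w(Q_1\cup Q_2)$ rather than on the sum, and I compare the pairs by the value they produce. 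The inequality holds because $w$ on a union counts each face once, and $Q^*$ splits into disjoint face sets.

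\textbf{Step 5: running time.} For each of the $n$ vertices (at most $2n$ DAGs), building $G$ takes $O(n^4)$ time. Building $C_v$ takes $O(n^6)$, and the SST weights take $O(n^6)$ by \Cref{lem:constructTime}. The heaviest path takes $O(n^4)$. In total this is $O(n^7)$.

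The main obstacle is Step 2: justifying that the maximizer may be taken to be maximal and to have $v$ as a vertex, so that the DAG correspondence captures it. I would argue this through monotonicity of $w$ under inclusion together with the convexity argument at $v$.
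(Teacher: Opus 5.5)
Your route is the paper's: one DAG per convex vertex, two per reflex vertex after splitting $P_v$ along a diagonal extension through $v$, heaviest paths combined via \Cref{obs:heavy-paths-convex} and \Cref{obs:heavy-paths-reflex}, and $O(n^6)$ per vertex from the cell partition and \Cref{lem:constructTime}, for $O(n^7)$ in total. Steps 3--5 are fine. This includes your point that faces of $\U$ never cross the splitting segment, so $w(Q^*)=w(Q^*\cap P_{v_1})+w(Q^*\cap P_{v_2})$.

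The one wrong piece is Step 2, which you single out as the main obstacle. It is not true that a vertex $v$ of $P$ cannot lie in the relative interior of an edge of $Q^*$. If a segment in $P$ has $v$ in its relative interior, then the interior angle of $P$ at $v$ is at least $180^\circ$. So this is exactly what can happen at a reflex vertex. For instance, $Q^*$ may have an edge along the extension of an edge of $P$ incident to $v$, with $v$ in its middle.

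Enlarging $Q^*$ to an inclusion-maximal restricted polygon is harmless for meeting the hypothesis of \Cref{clm:any_maximal_in_dag}, but it does not repair this. Monotonicity of $w$ says nothing about where $v$ sits on $\partial Q^*$, and an edge lying on such an extension cannot be pushed outward.

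The correct justification is the convex/reflex dichotomy:
\begin{itemize}
\item If $v$ is convex (angle $<180^\circ$), any convex $Q\subseteq P$ containing $v$ has $v$ as a vertex, so Step 3 applies directly.
\item If $v$ is reflex, this is precisely why the split in Step 4 is needed. After cutting along a line through $v$, the angle at $v$ in each $P_{v_i}$ is at most $180^\circ$. Then $v$ is a (possibly flat) vertex of $Q^*\cap P_{v_i}$, which is the setting of \Cref{sec:dag_const}, where ``convex'' means angle at most $180^\circ$.
\end{itemize}
With Step 2 restricted to convex vertices and the reflex case handled by Step 4, your proof coincides with the paper's.
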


\section{More (rotten) potatoes}\label{sec:more-potatoes}
The rotten meshed potato peeling problem that we solve in \Cref{sec:potato} is rather restricted. However, in this section we show that the algorithm that we develop for this problem can be utilized for approximating a more natural version of the rotten potato peeling problem, defined as follows.

\begin{problem}[Rotten Potato Peeling]\label{prb:rotten-potatoes}
    Given a polygon $P$ with $n$ vertices, and a set $R$ of polygons in $P$ with a total of $k$ vertices, find a convex polygon $Q$ in $P$ that maximizes the area of $Q \setminus R$.  
\end{problem}

Below, we present a $\frac14$-approximation algorithm for the rotten potato peeling problem, that runs in $O(kn^5)$ time, using the same general approach as in \Cref{sec:potato}.

First, observe that we can assume that an optimal solution $Q^*$ to \Cref{prb:rotten-potatoes} is a maximal polygon.
We prove the following theorem.
\begin{theorem}\label{thm:4-approx}
    For any maximal convex polygon $Q$ in $P$, there exists a restricted polygon $Q'$ that contains a vertex $v$ of $P$, such that $\text{Area}(Q' \setminus R)\ge \frac14 \text{Area}(Q\setminus R)$. 
    Moreover, if $v$ is a reflex vertex, split $P$ into two polygons $P_1,P_2$ using one of the diagonal extensions from the edges incident to $v$. Then $Q'$ is contained in one of $P_1,P_2$.
\end{theorem}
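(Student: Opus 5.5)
The plan is to derive the $\frac14$ bound directly from \Cref{thm:approx}, using averaging twice. The first factor of $\frac12$ comes from choosing among restricted polygons, and the second from the split at a reflex vertex.

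\textbf{Step 1: cover $Q$ with at most two restricted polygons.} Since $Q$ is a maximal convex polygon in $P$, the expansion in the proof of \Cref{thm:approx} makes it touch $\partial P$ at $P$-vertices. I will argue that a maximal $Q$ always contains at least one vertex of $P$; this is exactly what the construction of $\hat{Q}$ relies on. Then there are two cases.
\begin{itemize}
\item If $Q$ contains a convex vertex of $P$, the second part of \Cref{thm:approx} gives restricted polygons $Q_1,Q_2$ with $Q\subseteq Q_1\cup Q_2$. Both contain that vertex, since in the construction each of $Q_{\text{even}}, Q_{\text{odd}}$ contains $\hat{Q}$.
\item Otherwise, all the $P$-vertices of $Q''$ are reflex vertices of $P$.
\end{itemize}
In the first case,
\[
\text{Area}(Q\setminus R)\le \text{Area}(Q_1\setminus R)+\text{Area}(Q_2\setminus R),
\]
so one of them, say $Q'$, satisfies $\text{Area}(Q'\setminus R)\ge \frac12\text{Area}(Q\setminus R)\ge\frac14\text{Area}(Q\setminus R)$.

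\textbf{Step 2: the reflex-vertex case.} Here I would use the three-polygon cover $Q_{\text{even}},Q_{\text{odd}},Q_3$ from \Cref{thm:approx}. Each of these contains $\hat{Q}$, or at least two consecutive $P$-vertices, so each contains a vertex $v$ of $P$. A naive averaging gives only $\frac13$ before splitting, and the split would then reduce this to $\frac16$, which is not enough.

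So I would aim for a better cover. A reflex vertex $v$ of $P$ lying on $\partial Q$ forces the interior angle of $Q$ at $v$ to be at most $180^\circ$. The two edges of $P$ at $v$ then locally bound $Q$, so $Q$ lies on one side of the line through one of them, or in the wedge between their extensions. I would try to show that the third polygon can be absorbed into one of $Q_{\text{even}}, Q_{\text{odd}}$ via the same angle argument used for convex vertices, because the angle of $Q''$ at $v$ is convex whenever $v$ is a vertex of the convex $Q$. This gives a cover by two restricted polygons each containing $v$, and hence a factor of $\frac12$.

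\textbf{Step 3: the split.} Let $d$ be the chosen diagonal extension through $v$ that splits $P$ into $P_1,P_2$. The polygons $Q_1\cap P_1$, $Q_1\cap P_2$, $Q_2\cap P_1$, $Q_2\cap P_2$ are all restricted: they are convex, their edges lie on diagonal extensions (including $d$), and their vertices lie in $V_D$. Each of them contains $v$, and together they cover $Q$. Averaging over these four pieces gives one piece $Q'$ with
\[
\text{Area}(Q'\setminus R)\ge \tfrac14\,\text{Area}(Q\setminus R),
\]
and $Q'$ lies entirely in $P_1$ or in $P_2$. This argument works in both cases. For a convex $v$ no split is needed, and the factor of $\frac14$ follows directly from Step 1, giving the uniform statement.

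\textbf{Expected obstacle.} The main difficulty is Step 2: proving that a maximal $Q$ whose only contacts are reflex vertices can still be covered by two restricted polygons sharing a common vertex $v$. My first attempt would be to rerun the $Q_{\text{even}}/Q_{\text{odd}}$ construction with the odd leftover subchain merged at a $P$-vertex where $Q''$ has a convex angle. Such a vertex must exist since $Q$ is convex, and merging there should mirror the convex-vertex argument. If that fails, the fallback is to accept three polygons and refine the split so that only pieces containing $v$ are counted.
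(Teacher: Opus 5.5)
\textbf{Gap in Step 2.} Your convex-vertex case and the four-piece averaging in Step 3 are fine. But Step 2 is exactly where the $\frac14$ has to come from, and it does not go through. \Cref{thm:approx} gives a two-polygon cover only when $Q$ contains a convex vertex of $P$. Your argument for absorbing the third polygon when all contacts are reflex rests on the claim that $Q''$ has a convex angle at some $P$-vertex ``since $Q$ is convex''. That claim conflates $Q$ with $Q''$.

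In the construction, the rays from each $P$-vertex are rotated outward until they touch a vertex of $P$ or align with an edge of $P$. So at a reflex vertex of $P$, the angle of $Q''$ can open up to the full reflex interior angle of $P$. The proof of \Cref{thm:approx} notes exactly this: $Q''$ can have reflex vertices, but only at reflex vertices of $P$. For instance, let $P$ be a three-pointed star and let $Q$ touch its three reflex vertices. The rotation pushes every subchain out to a tip, so $Q''=P$ and every $P$-vertex of $Q''$ is reflex. There is no vertex at which to merge, and the construction yields three pieces. With three pieces, covering first and splitting afterwards gives only $\frac16$. Your fallback of discarding the pieces that miss $v$ destroys the covering property that the averaging relies on.

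\textbf{The missing idea.} Reverse the order: split first, then cover. Cut $P$ along the extension of an edge incident to the reflex vertex $v$. This makes $v$ a vertex of angle at most $180^\circ$, that is, a convex vertex, in each of $P_1,P_2$. Choose $Q_i=Q\cap P_i$ with $\text{Area}(Q_i\setminus R)\ge\frac12\text{Area}(Q\setminus R)$. Since $Q_i$ contains the convex vertex $v$ of $P_i$, the second part of \Cref{thm:approx}, applied inside $P_i$, covers $Q_i$ by two restricted polygons that both have $v$ as a vertex. One of them carries at least half of the good area of $Q_i$. This gives the factor $\frac14$, and it automatically yields $Q'\subseteq P_i$, which is the ``moreover'' part of the statement. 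This is the paper's argument.
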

\begin{proof}
Since $Q$ is maximal, it must contain a vertex $v$ of $P$ (otherwise it can be expanded as in the proof of \Cref{thm:approx}). Moreover, $Q$ must lie entirely within the visibility polygon of $v$ (because $Q$ is convex).

If $v$ is a convex vertex, then by \Cref{thm:approx}, $Q$ can be covered by at most two restricted polygons $Q_1,Q_2$, and by the construction, both of them have $v$ as a vertex. Clearly, one of $Q_1,Q_2$ has $\text{Area}(Q_i \setminus R)\ge \frac12 \text{Area}(Q\setminus R)$. 

Else, $v$ is a reflex vertex. Split $P$ into two polygons $P_1,P_2$ using one of the diagonal extensions from the edges incident to $v$. Denote $Q_1=Q\cap P_1$, and $Q_2=Q\cap P_2$. Again, one of $Q_1,Q_2$ has $\text{Area}(Q_i \setminus R)\ge \frac12 \text{Area}(Q\setminus R)$. Assume it is $Q_1$.
By applying \Cref{thm:approx} on $P_1$ and $Q_1$, we get that there exists two restricted polygons $Q_3,Q_4$ in $P_1$, both of them have $v$ as a vertex, that together cover $Q_1$ (because in $P_1$, $Q_1$ contains the convex vertex $v$). We get that $\max\{\text{Area}(Q_3 \setminus R), \text{Area}(Q_4 \setminus R)\}\ge \frac12 \text{Area}(Q_1\setminus R)\ge \frac14 \text{Area}(Q\setminus R)$, as required.
\end{proof}

We construct a DAG for each convex vertex $v$ of $P$, as we did in \Cref{sec:potato}, except for the weight function $w$. Instead, we define the weight of each vertex (directed subsegment) $s$ of the DAG as $w'(s)=\text{Area}(T_v(s) \setminus R)$, i.e. the weight of $s$ is the area of the ``good'' regions that $T_v(s)$ covers. We compute the weight of each vertex by simply computing the intersection of the corresponding triangle with $R$, which takes $O(|R|)=O(k)$ time per triangle, and $O(kn^5)$ in total, as there are $O(n^5)$ SSTs in all the DAGs. 

For each reflex vertex $v$ of $P$, we split $P$ into two polygons $P_1,P_2$ using one of the diagonal extensions from the edges incident to $v$, as in \Cref{thm:4-approx}. We then construct a DAG for each of $P_1,P_2$ separately, as if it was a visibility polygon of a convex vertex.

To find the maximal polygon, we look for maximal paths in all DAGs that we have constructed. 
For a path $\pi$ in the graph, we define $Q(\pi)$ as before, and since the SSTs that correspond to the vertices of $\pi$ are a triangulation of $Q(\pi)$, we get that area of the ``good'' regions that $Q(\pi)$ covers is exactly the sum of weights of these triangles. By similar arguments as in \Cref{sec:potato}, we get that the heaviest path in the graph is equivalent to a restricted polygon that covers the most area from $P/R$. We then take the maximal solution over all the vertices of $P$. 
We conclude that this algorithm finds in $O(kn^5)$ a restricted polygon $Q$ that has the properties from \Cref{thm:4-approx}: it contains a vertex $v$ of $P$, and if $v$ is a reflex vertex, it is contained in one of the subpolygons obtained by splitting $P$ using one of the diagonal extensions from the edges incident to $v$. Since $Q$ is the maximal polygon that has these properties, we obtain the following theorem.

\begin{theorem}
    Given a polygon $P$ with $n$ vertices, and a set $R$ of polygons in $P$ with a total of $k$ vertices, there is an algorithm that in $O(kn^5)$ time finds a convex polygon $Q$ in $P$ such that $\text{Area}(Q \setminus R)\ge \frac14 \text{Area}(Q^* \setminus R)$, where $Q^*$ is a convex polygon that maximizes $\text{Area}(Q^* \setminus R)$.
\end{theorem}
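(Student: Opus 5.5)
The plan is to combine \Cref{thm:4-approx} with the DAG machinery of \Cref{sec:potato}, where vertex weights are now $w'(s)=\text{Area}(T_v(s)\setminus R)$. First, since enlarging $Q$ can only increase $\text{Area}(Q\setminus R)$, we may assume $Q^*$ is a maximal convex polygon in $P$. Applying \Cref{thm:4-approx} to $Q^*$ gives a restricted polygon $Q'$ with $\text{Area}(Q'\setminus R)\ge\frac14\text{Area}(Q^*\setminus R)$. This $Q'$ has a vertex $v$ of $P$ as one of its vertices, since the construction in \Cref{thm:approx} keeps the convex $P$-vertex on both covering polygons. If $v$ is reflex, $Q'$ lies in one of the two subpolygons $P_1,P_2$, in which $v$ is convex. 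So it suffices to show that the algorithm's output $Q$ satisfies $\text{Area}(Q\setminus R)\ge \text{Area}(Q'\setminus R)$. We also need $Q$ to be a convex polygon in $P$.

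For the comparison, fix the DAG $G$ built for the relevant (convex) vertex $v$, over $\VP(v)$ or over the visibility polygon of $v$ inside $P_i$. By \Cref{clm:any_maximal_in_dag}, extending $Q'$ to a maximal restricted polygon containing $v$ as a vertex gives a path $\pi'$ with $Q(\pi')\supseteq Q'$. Since the area of a good region is monotone under inclusion, the weight can only increase. By the argument of \Cref{clm:any_maximal_path}, the SSTs $T_v(s)$ for $s\in\pi'$ triangulate $Q(\pi')$ with disjoint interiors. Additivity of area then gives
\[
\sum_{s\in\pi'}w'(s)=\text{Area}(Q(\pi')\setminus R)\ge \text{Area}(Q'\setminus R).
\]
Conversely, the heaviest path $\pi$ over all DAGs can be taken maximal, since weights are nonnegative. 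By \Cref{clm:any_maximal_path}, $Q(\pi)$ is a restricted, hence convex, polygon in $P$, and by the same triangulation identity its weight equals $\text{Area}(Q(\pi)\setminus R)$. Chaining the inequalities gives the factor $\frac14$.

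For the running time, each DAG has $O(n^4)$ vertices and edges. Summed over the $O(n)$ DAGs (two per reflex vertex), that is $O(n^5)$ SSTs. Each weight is obtained by clipping a triangle against $R$ in $O(k)$ time, for $O(kn^5)$ total. Constructing the DAGs (segments, subsegments and turn edges, without the cell partition) and running heaviest-path dynamic programming cost $O(n^5)$ overall. This is dominated by $O(kn^5)$ whenever $k\ge1$, and the case $k=0$ is trivial to absorb.

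The main obstacle I expect is the triangulation/additivity step in the reflex case. There the DAG lives in $P_i\cap\VP(v)$ rather than a true visibility polygon of a convex vertex of $P$. I would need to check that \Cref{clm:max-path-endpoints} and \Cref{clm:any_maximal_path} still hold when the splitting diagonal extension acts as a boundary edge. This should work because that segment is itself a diagonal extension, and $v$ is convex in $P_i$.
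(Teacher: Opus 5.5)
Your proposal is correct and follows essentially the same route as the paper. You reduce via \Cref{thm:4-approx} to restricted polygons anchored at a vertex $v$ of $P$ that is convex in $\VP(v)$ or in the split subpolygon. You weight the DAG by $\text{Area}(T_v(s)\setminus R)$ and use that the SSTs of a path triangulate $Q(\pi)$, so the path weight equals $\text{Area}(Q(\pi)\setminus R)$. You then clip the $O(n^5)$ SSTs against $R$ in $O(k)$ time each. Beyond the paper, you also note that nonnegative weights let the heaviest path be taken maximal, handle $k=0$, and flag that \Cref{clm:max-path-endpoints} and \Cref{clm:any_maximal_path} must be rechecked in the split subpolygon, a check the paper also leaves implicit.
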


\section{Conclusion}
We have presented a new algorithm for the minimum convex cover problem that retains the $O(\log n)$ approximation ratio of Eidenbenz and Widmayer \cite{EidenbenzWidmayer03}, while dramatically improving the running time. By removing the dependence on second-order grid triangulations, and using a DAG-based approach to efficiently find maximum area convex polygons whose vertices are constrained to $P$ or the first-order grid ($\A_D$), we achieve a worst-case running time of $O(n^8)$. This is a substantial improvement over the $O(n^{29} \log n)$ complexity of prior work.

While $O(n^8)$ is still impractical, our method suggests that further improvements may be possible, and may also lead to advances in other visibility and covering problems.

\section*{Acknowledgments}
This research was supported by the ISRAEL SCIENCE FOUNDATION (grant No. 2135/24).

\bibliographystyle{abbrv}
\bibliography{refs} 

\end{document}